\documentclass[12pt]{article}

\usepackage{amsmath,amssymb,amsfonts}
\usepackage{algorithmic}
\usepackage{graphicx}
\usepackage{textcomp}
\usepackage{algorithm}
\usepackage{subcaption}
\usepackage{bm}
\usepackage{amsthm}
\usepackage{array}
\usepackage{booktabs}
\usepackage{url}
\usepackage{float}
\usepackage{placeins}
\usepackage{geometry}
\usepackage{hyperref}
\usepackage{adjustbox}

\newtheorem{theorem}{Theorem}
\newtheorem{lemma}{Lemma}
\newtheorem{proposition}{Proposition}

\title{Re-Rooting-Based Fault-Tolerant Broadcasting in Dense Gaussian Networks}

\author{Bader Albader, Mohamed R. Al-Mulla, and Galal Hassan\\
\small Department of Computer Science, Kuwait University, Kuwait\\
\small \texttt{albader@cs.ku.edu.kw}}

\date{}

\begin{document}

\maketitle

\begin{abstract}
Dense Gaussian networks provide degree-4 interconnection topologies with small diameter and regular structure, making them suitable for efficient one-to-all broadcasting. However, node failures can disrupt the broadcast process when faulty nodes occupy internal forwarding positions. This paper proposes a lightweight fault-tolerant broadcasting method based on dynamic source relocation, or re-rooting. Instead of constructing redundant spanning trees or backup routing structures, the proposed method selects a new source node so that the faulty nodes are located at graph distance \(k\), the network diameter, from the new source. Consequently, faulty nodes become leaf-level nodes in the broadcast process and are not required to forward the message. For the single-fault case, the new source is selected directly from the graph-distance-\(k\) boundary of the faulty node. For the two-fault case, we prove that for any pair of faulty nodes in \(G(k+(k+1)i)\), there exists a node whose graph distance from both faulty nodes is exactly \(k\). The source-selection procedure requires \(O(k)\) time. Since the original one-to-all broadcast completes in \(k\) parallel steps and the relocation distance is at most \(k\), the proposed method completes in at most \(2k\) steps in the worst case. We also show that the two-fault guarantee does not generally extend to arbitrary three-fault configurations by giving a counterexample in \(G(3+4i)\). Simulation results confirm complete delivery to all non-faulty nodes under the tested one- and two-node failure scenarios, while the baseline broadcast may fail when faulty nodes occur at internal forwarding positions.
\end{abstract}

\noindent\textbf{Keywords:} Dense Gaussian networks, fault-tolerant broadcasting, re-rooting, source relocation, interconnection networks, one-to-all broadcasting, node failures, parallel communication.

\section{Introduction}

Interconnection topology plays a fundamental role in the performance, scalability, and reliability of parallel and multicore systems. As modern computing systems continue to integrate a large number of processing elements on a single chip, efficient and robust communication mechanisms are becoming increasingly important. In such systems, collective communication operations, particularly broadcasting, are essential for synchronization, data dissemination, and control signaling~\cite{ref6,ref7,ref8,ref10,ref11}.

Dense Gaussian networks have been proposed as an attractive class of interconnection topologies owing to their regular structure, symmetry, and favorable distance properties compared with other topologies, such as tori. Their close relationship with dense circulant graphs enables efficient routing and communication algorithms, making them suitable for high-performance parallel systems~\cite{ref1,ref2,ref3}. Prior work established optimal one-to-all broadcasting mechanisms for dense Gaussian networks under fault-free conditions~\cite{ref2}.

From a practical perspective, dense Gaussian networks are relevant to communication fabrics in Network-on-Chip (NoC) systems, chip multiprocessors, distributed accelerators, and edge-computing architectures, where scalable and low-latency communication is required among many processing elements. In such systems, one-to-all broadcast is commonly used for distributing control signals, synchronization messages, configuration data, cache-coherence information, and workload-management commands. Since these platforms often operate under strict constraints on area, energy, and routing complexity, lightweight fault-tolerant mechanisms are preferable to approaches that require multiple redundant routing structures or complex runtime recovery logic.

However, in practical systems, node failures are inevitable and can significantly degrade communication performance. In particular, even a small number of faulty nodes can disrupt the broadcast process and prevent complete message delivery. Therefore, designing fault-tolerant broadcasting mechanisms that maintain efficiency while ensuring robustness is critical. Recent work on dense Gaussian on-chip networks has considered protection routing through completely independent spanning trees~\cite{ref9}. More broadly, independent and edge-independent spanning tree constructions remain active research topics for reliable communication in interconnection networks~\cite{ref12,ref13,ref14}.

In this paper, we revisit fault-tolerant broadcasting in dense Gaussian networks and propose an efficient approach based on \emph{dynamic source relocation} (also referred to as re-rooting). Instead of relying on additional fault-tolerant communication structures, the proposed method identifies a new source node whose graph distance from each faulty node is equal to the network diameter \(k\), and then performs standard broadcasting from that node. As a result, the method preserves the simplicity of the original broadcast algorithm while effectively neutralizing the effects of faulty nodes.

Unlike fault-tolerant approaches based on independent or completely independent spanning trees, which require the construction and maintenance of multiple disjoint communication structures~\cite{ref9,ref12,ref13,ref14}, the proposed method avoids preprocessing and structural redundancy. Instead, it dynamically relocates the source node to ensure that the faulty nodes lie at the periphery of the broadcast process. This results in a simpler topology-specific mechanism while still achieving complete broadcast delivery for all tested one- and two-node failure scenarios.

The proposed approach was developed for both single- and two-node failure scenarios. The source-selection process is efficient and remains linear in the network diameter \(k\). To evaluate the effectiveness and scalability of the approach, we conducted extensive simulations on dense Gaussian networks with up to \(80401\) nodes under multiple fault-placement modes. The results demonstrate that the proposed method achieves a \textbf{100\% broadcast success rate} in all tested configurations, whereas the baseline broadcast algorithm frequently fails as the network size increases.

The main contributions of this paper are summarized as follows:
\begin{itemize}
	\item We propose a lightweight re-rooting-based fault-tolerant broadcasting method for dense Gaussian networks. Instead of constructing redundant spanning trees, backup paths, or additional routing structures, the proposed method dynamically relocates the effective broadcast source.
	
	\item We develop source-selection algorithms for single-node and two-node failure scenarios. The selected new source \(NS\) places the faulty nodes at graph distance \(k\), the network diameter, so that they become leaf-level nodes in the broadcast process.
	
	\item We prove that for any pair of faulty nodes in \(G(k+(k+1)i)\), there always exists a node whose graph distance from both faulty nodes is exactly \(k\). This provides a deterministic existence guarantee for the two-fault re-rooting strategy.

	\item We show that this guarantee does not generally extend to arbitrary three-node failures by presenting a counterexample in \(G(3+4i)\). This establishes the theoretical boundary of the proposed method.
	
	\item We analyze the computational and communication cost of the method. The source-selection procedure requires \(O(k)\)~time, the original one-to-all broadcast completes in \(k\) parallel steps, and the re-rooted broadcast completes in at most \(2k\) steps in the worst case.
	
	\item We evaluate the proposed method through simulations under one- and two-node failure scenarios and compare it with the original non-fault-tolerant broadcast. The results confirm complete delivery to all non-faulty nodes in the tested configurations.
\end{itemize}

The remainder of this paper is organized as follows. Section~II reviews related work. Section~III introduces dense Gaussian networks. Section~IV reviews the baseline one-to-all broadcasting mechanism. Section~V presents the proposed fault-tolerant approach. Section~VI presents the experimental results, and Section~VII concludes the paper.

\section{Related Work}

Fault-tolerant communication has long been an important research topic in interconnection networks and parallel systems. Broadcasting, in particular, plays a central role in collective communication operations, synchronization, and control signaling. Consequently, considerable effort has been devoted to developing reliable broadcasting mechanisms that can tolerate node and link failures while preserving efficient communication performance.

Classical approaches to fault-tolerant broadcasting often rely on redundant communication structures, such as multiple spanning trees or disjoint routing paths. Independent spanning trees have been widely studied because they provide multiple internally node-disjoint communication paths between nodes, allowing communication to continue even in the presence of failures. Cheng \emph{et al.}~\cite{ref12} provide a recent survey of independent spanning trees in networks and discuss their role in reliable communication and fault-tolerant broadcasting. Recent studies have also developed independent or edge-independent spanning-tree constructions for specific interconnection networks, including bubble-sort networks~\cite{ref13} and folded crossed cubes~\cite{ref14}.

Fault tolerance has also been extensively investigated in the context of Network-on-Chip (NoC) architectures and modern multicore communication systems. Of particular relevance to this work, Pai \emph{et al.}~\cite{ref9} configured protection routing via completely independent spanning trees in dense Gaussian on-chip networks. Their work uses the dense Gaussian topology to construct redundant protection structures for reliable routing. Recent NoC studies have also considered fault-tolerant and quality-of-service-aware routing~\cite{ref15}, adaptive and passage-based routing around faulty regions in 3D mesh NoCs~\cite{ref16}, machine-learning-based NoC optimization~\cite{ref17}, and transformer-based reinforcement learning for fault-tolerant NoC application mapping~\cite{ref18}. These studies illustrate the continued importance of resilient communication mechanisms in many-core and on-chip systems. Pasricha and Dutt~\cite{ref10} and Flich and Bertozzi~\cite{ref11} further emphasize the importance of scalable and reliable communication structures for many-core systems.

Dense Gaussian networks have attracted significant attention because of their symmetry, regularity, and favorable distance properties. Mart\'inez \emph{et al.} \cite{ref2} demonstrated that dense Gaussian networks are suitable topologies for on-chip multiprocessors. Beivide \emph{et al.} \cite{ref3,ref4} investigated Gaussian interconnection networks and their relationship to circulant graphs, whereas Flahive and Bose \cite{ref1} studied their topological properties and communication efficiency.

Existing fault-tolerant communication approaches typically require additional routing structures, redundant paths, adaptive path selection, runtime monitoring, or specialized recovery mechanisms. In contrast, the method proposed in this paper follows a different philosophy. Rather than constructing independent spanning trees, completely independent spanning trees, backup routing paths, or continuously adapting routes during propagation, fault tolerance is achieved through dynamic source relocation. The broadcast source is relocated such that the faulty nodes become leaf-level nodes with respect to the broadcast process, thereby preserving the correctness of the original one-to-all broadcasting algorithm without introducing redundant broadcast structures.

To the best of our knowledge, dynamic source relocation has not previously been investigated as a fault-tolerant broadcasting mechanism for dense Gaussian networks.

\section{Background}

In this section, we introduce the definitions and terminology of dense Gaussian networks and describe their relationship with dense circulant graphs of degree four.

\subsection{Dense Circulant Graphs}

A circulant graph with $N$ vertices $\{0,1,\ldots,N-1\}$ and $m$ jumps $\{j_1,j_2,\ldots,j_m\}$ is defined as an undirected graph in which each vertex $n$, $0\le n\le N-1$, is adjacent to the vertices $n \pm j_i \bmod N$, for $1\le i\le m$. This graph is denoted by $C_N(j_1,j_2,\ldots,j_m)$. The vertex symmetry of circulant graphs allows their analysis to be carried out from any node, typically node zero, which simplifies their study.

As a special case, when $m=2$, we obtain $C_N(j_1,j_2)$ graphs, which are of degree four. In such graphs, there are \(4d\) nodes at graph distance \(d\) from any given node owing to symmetry. For a given diameter $k$, the maximum number of nodes in a $C_N(j_1,j_2)$ graph is given by~\cite{ref3,ref4}
\begin{equation}
	N = 2k^2 + 2k + 1 = k^2 + (k+1)^2.
\end{equation}

To achieve this maximum, we select $j_1 = k$ and $j_2 = k+1$, resulting in a family of graphs known as \emph{dense circulant graphs}. These graphs exhibit minimal diameter among all degree-four circulant graphs~\cite{ref2,ref5,ref7}, making them attractive for use in communication networks.

\subsection{Dense Gaussian Networks}

Two-dimensional labeling of nodes in degree-four circulant graphs using Gaussian integers was introduced in~\cite{ref3,ref6}. In this representation, each node is associated with a Gaussian integer, leading to the concept of dense Gaussian networks, which provide an equivalent but more intuitive geometric interpretation of the model.

Gaussian integers are defined as
\begin{equation}
	\mathbb{Z}[i] = \{x + yi \mid x, y \in \mathbb{Z}\},
\end{equation}
and form a Euclidean domain with norm
\begin{equation}
	N(x + yi) = x^2 + y^2.
\end{equation}

\begin{figure}[H]
	\centering
	\begin{subfigure}{0.45\linewidth}
		\centering
		\includegraphics[width=\linewidth]{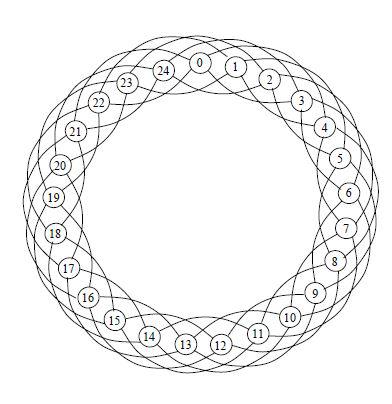}
		\caption{}
	\end{subfigure}
	\hfill
	\begin{subfigure}{0.45\linewidth}
		\centering
		\includegraphics[width=\linewidth]{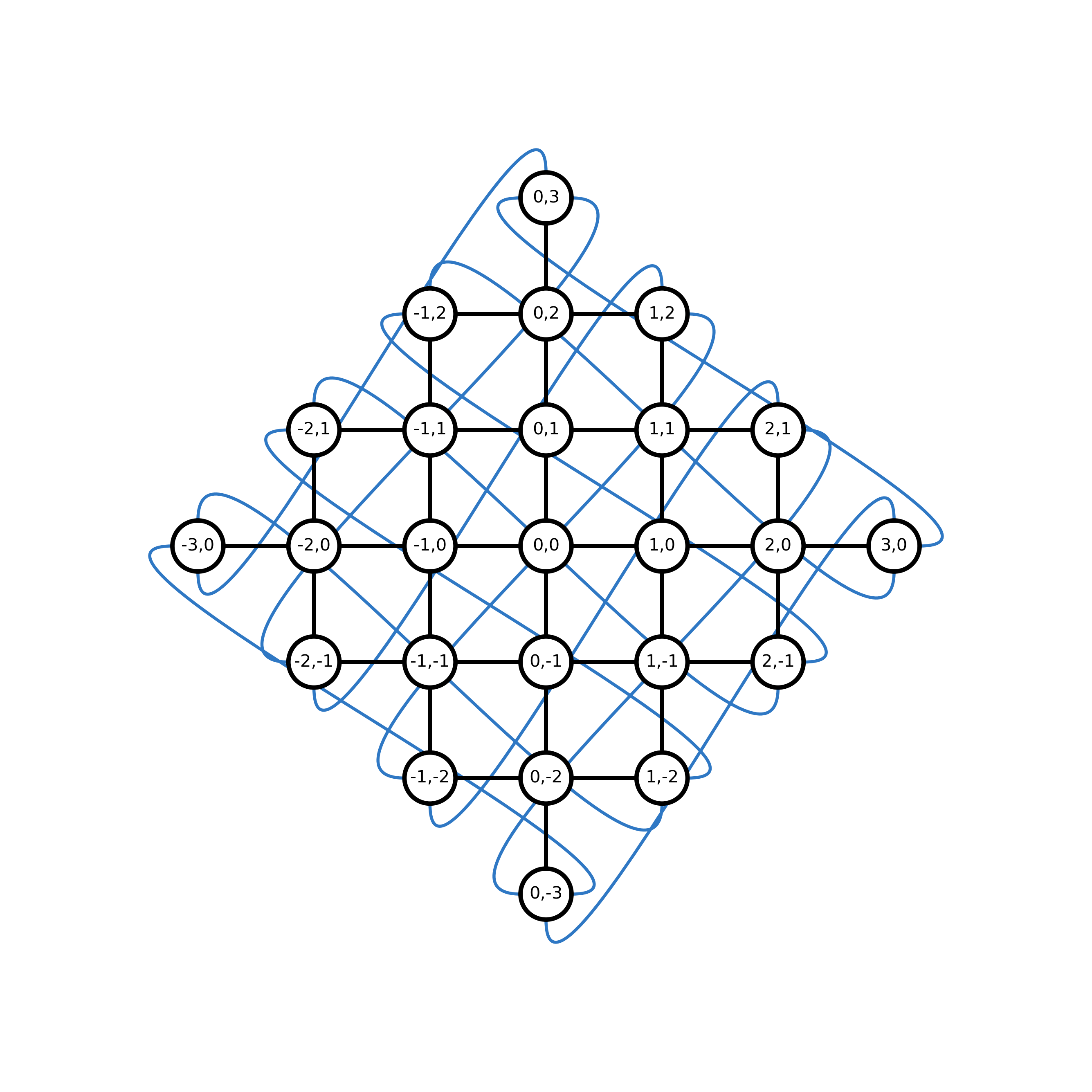}
		\caption{}
	\end{subfigure}
	\caption{(a) $C_{25}(3,4)$ and (b) its isomorphic graph $G(3+4i)$.}
	\label{fig:fig1}
\end{figure}

The Euclidean structure enables division with remainder in $\mathbb{Z}[i]$. For any $\alpha, \pi \in \mathbb{Z}[i]$, $\pi \neq 0$, there exist $q, r \in \mathbb{Z}[i]$ such that $\alpha = q\pi + r$ with $N(r) < N(\pi)$. This allows the definition of a modulo operation on the Gaussian integers.

Given a nonzero $\pi \in \mathbb{Z}[i]$, we define
\begin{equation}
	\mathbb{Z}[i]_{\pi} = \{\alpha \bmod \pi \mid \alpha \in \mathbb{Z}[i]\}.
\end{equation}

For $\pi = a + bi$, the modulo operation can be computed as
\[
\alpha \bmod \pi = \alpha - \left[\frac{\alpha \pi^*}{a^2 + b^2}\right]\pi,
\]
where $\pi^*$ is the conjugate of $\pi$, and $[\cdot]$ denotes rounding applied independently to the real and imaginary parts.

Let $\pi = k + (k+1)i \in \mathbb{Z}[i]$. A dense Gaussian network $G(\pi) = (V, E)$ is defined as
\begin{enumerate}
	\item $V = \mathbb{Z}[i]_{\pi}$,
	\item $E = \{(\alpha,\beta) \in V \times V \mid (\beta - \alpha) \equiv \pm 1, \pm i \pmod{\pi}\}$.
\end{enumerate}

Equivalently, nodes can be represented as integer pairs $(x,y)$ satisfying $|x| + |y| \le k$, where each node $(x,y)$ is connected to $(x \pm 1, y)$ and $(x, y \pm 1)$ under the modulo arithmetic.

It has been shown that dense circulant graphs and dense Gaussian networks are isomorphic~\cite{ref4,ref6}. Fig.~\ref{fig:fig1} illustrates this relationship by showing the dense circulant graph $C_{25}(3,4)$ and its corresponding Gaussian network representation $G(3+4i)$.

For \(N = k^2+(k+1)^2\), the graphs \(C_N(k,k+1)\) and \(G(k+(k+1)i)\) are isomorphic. To move between the Gaussian and integer representations, we define the integer-labeling function
\[
\phi:G(k+(k+1)i)\rightarrow \mathbb{Z}_N
\]
by
\[
\phi(x+yi)\equiv kx+(k+1)y \pmod N.
\]
Thus, each Gaussian-network node \(x+yi\) is assigned a unique integer label modulo \(N\). Conversely, the inverse correspondence maps each integer label in \(\mathbb{Z}_N\) to its associated Gaussian node in \(\mathbb{Z}[i]_{k+(k+1)i}\).

Because this correspondence is an isomorphism, node differences are preserved under the labeling. Hence, for any two nodes \(U,V\in G(k+(k+1)i)\),
\[
\phi(U-V)\equiv \phi(U)-\phi(V) \pmod N.
\]
This property allows relative node positions, wrap-around adjacency, and distance-based source-selection conditions to be expressed using modular integer arithmetic.

Both representations will be used interchangeably. For example, in the case \(k=3\), the node \((-2,1)\) in the Gaussian representation has integer label
\[
\phi(-2+i)\equiv 3(-2)+4(1)\equiv -2\equiv 23 \pmod{25},
\]
and therefore corresponds to node \(23\) in the circulant graph.

The modulo operation also determines the wrap-around adjacency. For example, in $G(3+4i)$, the right neighbor of $(2,1)$ is
\[
(2,1) + (1,0) = (3,1) \equiv (0,-3) \pmod{(3,4)}.
\]
The upper neighbor is
\[
(2,1) + (0,1) = (2,2) \equiv (-1,-2) \pmod{(3,4)}.
\]

Throughout this paper, the distance \(d(u,v)\) between two nodes \(u\) and \(v\) denotes the graph distance in the dense Gaussian network, namely the minimum number of communication links, or hops, on any path connecting \(u\) and \(v\). Since the diameter of \(G(k+(k+1)i)\) is \(k\), every pair of nodes satisfies
\[
d(u,v)\leq k.
\]
Accordingly, a node at graph distance \(k\) from another node is located on the boundary of the network with respect to that node.

\section{One-to-All Broadcasting}

An optimal one-to-all broadcast routing algorithm for dense Gaussian networks was presented in~\cite{ref3,ref4,ref6}. This algorithm is simple, symmetric, and identical for all nodes and packets, making it well suited for efficient hardware implementation.

In this section, we briefly review this algorithm, which serves as the \emph{baseline broadcast process} for the fault-tolerant approach developed in the next section.

Given an arbitrary node as the source, the remaining nodes of the network can be partitioned into four disjoint subsets. These regions can be viewed as four discrete right-angled triangles with legs of size $k$ and $k+1$, each containing $k(k+1)/2$ nodes, as illustrated in Fig.~\ref{fig:fig2}.

We refer to each such region as a \emph{$k$-triangle}. Thus, the entire network can be interpreted as a central source node surrounded by four $k$-triangular quadrants.

Referring to Fig.~\ref{fig:fig2}, node $(0,0)$ has four neighbors: $(1,0)$, $(-1,0)$, $(0,1)$, and $(0,-1)$. Each neighbor lies at the right angle of one of the four $k$-triangular quadrants: southeast (SE), northeast (NE), northwest (NW), and southwest (SW). From each of these nodes, the broadcast propagates outward, reaching $d$ nodes at distance $d$, for $1 < d \le k$.

A router model with full-duplex links and all-port capabilities was assumed. Routers support both unicast and broadcast routing, with the first header bit ($B/U$) indicating the routing mode of the packet. For broadcast packets, a \emph{distance} field is initialized to the network diameter $k$ and is decremented at each hop. The broadcast is terminated when this field reaches zero.

The third header field, denoted as \emph{NSEW}, consists of four bits indicating the output ports (North, South, East, and West) to which the packet will be forwarded. The resulting header size is compact, requiring $\log_2 k$ bits plus five additional control bits.

At the beginning of the broadcast, the source node injects a packet with:
\[
B/U = 1, \quad \textit{distance} = k, \quad \textit{NSEW} = 1111.
\]

\begin{figure}[H]
	\centering
	\includegraphics[width=0.45\linewidth]{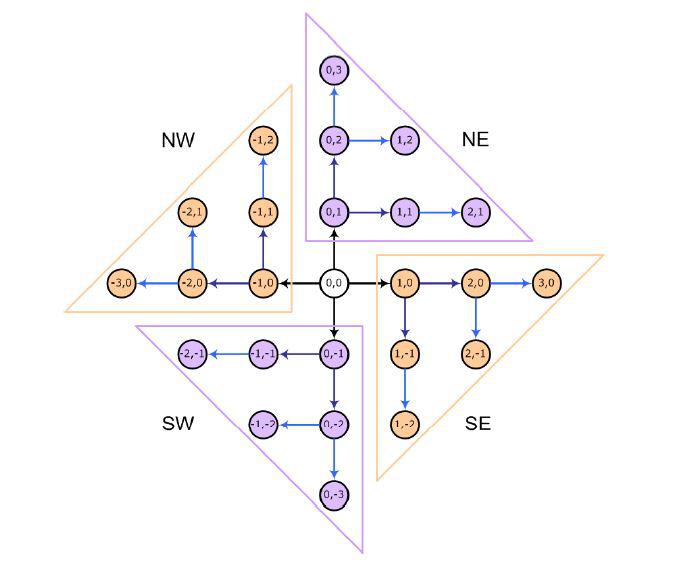}
	\caption{One-to-all Broadcasting Algorithm.}
	\label{fig:fig2}
\end{figure}

\begin{algorithm}[H]
	\caption{Broadcast forwarding algorithm}
	\begin{algorithmic}[1]
		\REQUIRE $B/U$ mask, distance, NSEW: header of the incoming packet
		\STATE $N\_\text{out}\_\text{mask}=1010$;
		\STATE $S\_\text{out}\_\text{mask}=0101$;
		\STATE $E\_\text{out}\_\text{mask}=0110$;
		\STATE $W\_\text{out}\_\text{mask}=1001$;
		\IF{distance $< k$}
		\STATE consume packet
		\ENDIF
		\IF{distance $> 0$}
		\STATE distance $\leftarrow$ distance $-1$
		\IF{NSEW \& 1000}
		\STATE send packet to $N$ with NSEW $\leftarrow$ NSEW $\&$ $N\_\text{out}\_\text{mask}$
		\ENDIF
		\IF{NSEW \& 0100}
		\STATE send packet to $S$ with NSEW $\leftarrow$ NSEW $\&$ $S\_\text{out}\_\text{mask}$
		\ENDIF
		\IF{NSEW \& 0010}
		\STATE send packet to $E$ with NSEW $\leftarrow$ NSEW $\&$ $E\_\text{out}\_\text{mask}$
		\ENDIF
		\IF{NSEW \& 0001}
		\STATE send packet to $W$ with NSEW $\leftarrow$ NSEW $\&$ $W\_\text{out}\_\text{mask}$
		\ENDIF
		\ENDIF
	\end{algorithmic}
	\label{alg:broadcast}
\end{algorithm}

In the first step, the source transmits packets in all four directions, reaching the roots of the four $k$-triangles. Each output port applies a predefined bitmask to update its header. For example, the North output uses the mask $1010$, directing the packet to the NE quadrant.

As the broadcast progresses, nodes located along the axes propagate packets in two directions, whereas other nodes forward packets along a single direction. This mechanism ensures that duplicate packets are not generated.

An important property of this algorithm is that the network utilization is balanced. At step $d$, exactly $d$ packets traverse each quadrant, thereby providing uniform load distribution across all directions.

The algorithm can be efficiently implemented in hardware using bitmasking operations. Each router forwards the packet to the output ports indicated in the NSEW field, and each output port updates this field using a logical AND operation with its corresponding mask.

Because no duplicate transmissions occur, the total number of links used is $N - 1 = 2k^2 + 2k$, which is optimal for one-to-all broadcasting. Furthermore, the total broadcast time is proportional to the network diameter $k$, which scales as $O(\sqrt{N})$.

The broadcast forwarding procedure is summarized in Algorithm~\ref{alg:broadcast}.

\section{Fault-Tolerant One-to-All Broadcasting}

In this section, we present our contribution to enabling the one-to-all broadcasting algorithm to tolerate node failures.

The key idea is to preserve the original broadcast process by relocating the source node whenever failures disrupt the broadcast propagation. As in standard fault-tolerant broadcasting models, we assume that the original source node \(S\) is operational. Faults are therefore considered among the remaining network nodes.

Faulty nodes can be classified into two categories:
\begin{itemize}
	\item Nodes at graph distance $k$ from the source node $S$,
	\item Nodes at graph distance less than $k$ from the source node.
\end{itemize}

As illustrated in Fig.~\ref{fig:fig3}, nodes located at graph distance \(k\) from the source are leaf-level nodes and therefore do not participate in packet forwarding. Consequently, failures occurring only on the graph-distance-\(k\) boundary do not affect the correctness of the broadcast algorithm. In contrast, failures occurring at graph distances less than \(k\) may interrupt packet propagation and must therefore be explicitly addressed.

If at least one faulty node lies at graph distance less than \(k\), we identify a new source node \(NS\), referred to as the re-rooted source, such that its graph distance from each faulty node is exactly \(k\). Consequently, all faulty nodes behave as leaf-level nodes with respect to the new source and do not interfere with the broadcast process. Once \(NS\) is determined, the packet is first transmitted from the original source \(S\) to \(NS\) using one-to-one routing~\cite{ref3,ref4,ref6}, which requires at most \(k\) steps. Then, \(NS\) initiates the standard one-to-all broadcast.

The faulty nodes do not affect the one-to-one routing step from \(S\) to \(NS\). Since each faulty node is at distance \(k\) from \(NS\), no faulty node can be an internal node on a shortest path from \(S\) to \(NS\). If a faulty node were an internal node on such a path, its remaining distance to \(NS\) would be strictly less than \(k\), contradicting the construction of \(NS\). This argument holds for both single- and two-node failure scenarios.

We now present the algorithms for determining the new source node in the presence of one or two faulty nodes.

\begin{figure}[H]
	\centering
	\begin{subfigure}{0.35\linewidth}
		\centering
		\includegraphics[width=\linewidth]{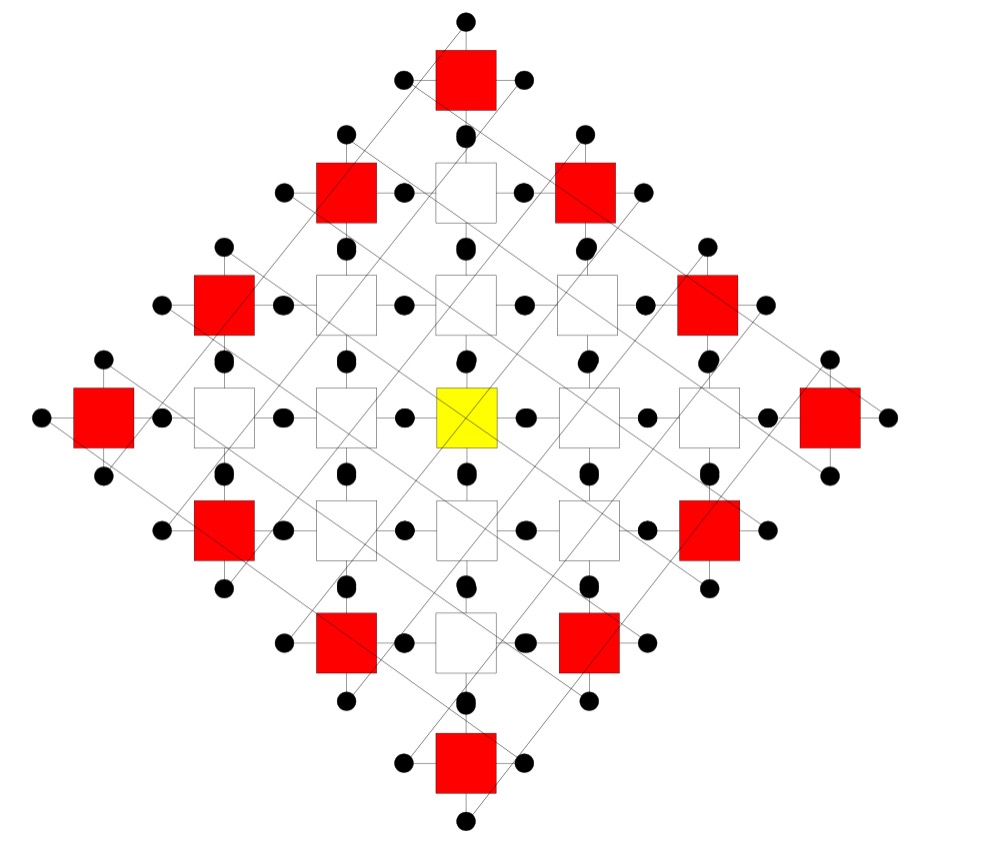}
		\caption{nodes located on the graph-distance-\(k\) boundary from \(S\)}
	\end{subfigure}
	\hfill
	\begin{subfigure}{0.35\linewidth}
		\centering
		\includegraphics[width=\linewidth]{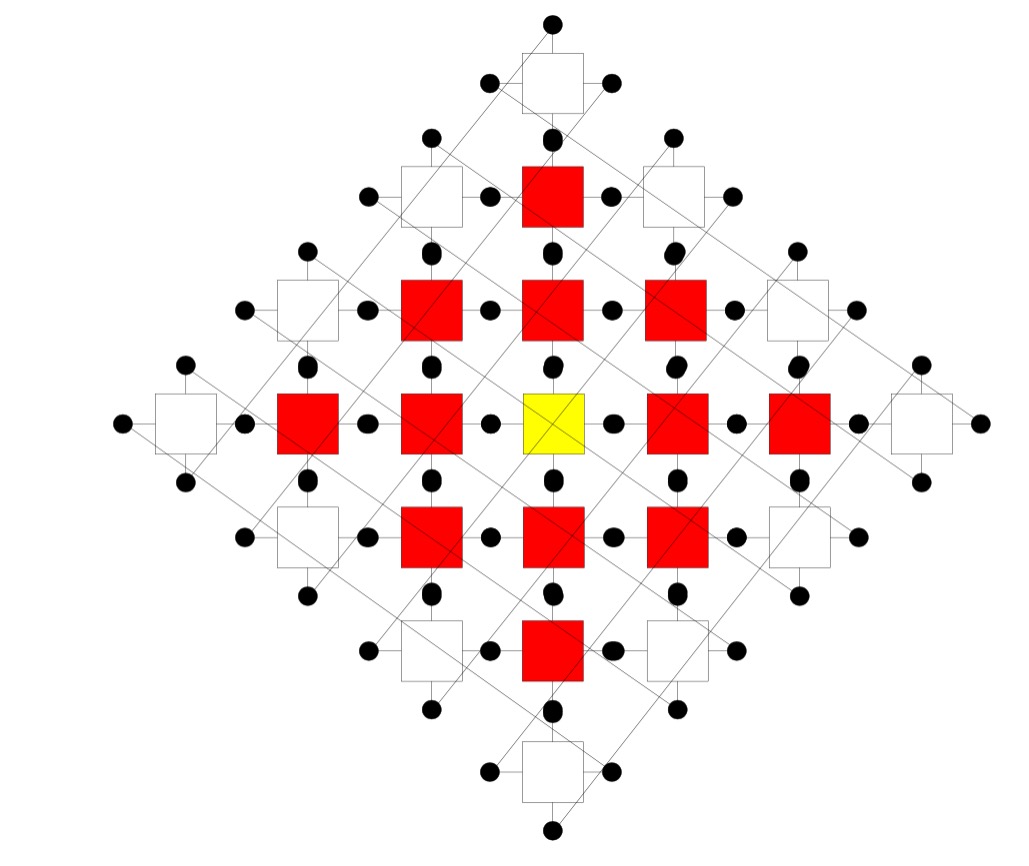}
		\caption{nodes located at graph distance less than \(k\) from \(S\)}
	\end{subfigure}
	\caption{Types of faulty nodes.}
	\label{fig:fig3}
\end{figure}

\subsection{Tolerating One Node Failure}

In the case of a single faulty node \((x,y)\), a new source node can be selected such that its graph distance from \((x,y)\) is exactly \(k\). One simple approach is to choose
\[
NS = ((x,y) + (v,w)) \bmod \pi,
\]
where $(v,w)$ is any node satisfying $|v| + |w| = k$ (for example $(k,0)$).

Alternatively, we provide a more efficient algorithm that avoids modulo operations and relies solely on sums and comparisons. This algorithm iterates over the candidate nodes and returns the first valid solution.

\begin{algorithm}[H]
	\caption{AlgorithmOneFailureFindingNS -- input: $(x,y)$}
	\begin{algorithmic}[1]
		\FOR{$r := -k$ to $k$}
		\STATE $s := k-|r|$
		\IF{$|x+r| + |y+s| \le k$}
		\STATE \textbf{return} $(x+r,y+s)$
		\ENDIF
		\IF{$|x+r| + |y-s| \le k$}
		\STATE \textbf{return} $(x+r,y-s)$
		\ENDIF
		\ENDFOR
	\end{algorithmic}
	\label{alg:onefail}
\end{algorithm}

The complexity of this algorithm is linear in $k$.

\subsection{Tolerating Two Node Failures}

We now consider the case of two faulty nodes $(a,b)$ and $(c,d)$. The goal is to find a node \((r,s)\) whose graph distance from both faulty nodes is equal to \(k\).

To simplify the problem, we first assume that one faulty node is located at $(0,0)$ and solve for an arbitrary node $(x,y)$.  Equivalently, the desired solutions are the nodes in the intersection
\[
B_k(0)\cap \bigl((x,y)+B_k(0)\bigr),
\]
where \(B_k(0)\) denotes the graph-distance-\(k\) boundary around node \(0\).

This problem can be expressed using the following Diophantine equation:

\begin{equation}
	(kx_1+(k+1)y_1) \bmod N = (z+kx_2+(k+1)y_2) \bmod N
\end{equation}
where $z = (kx+(k+1)y) \bmod N$, and $|x_1|+|y_1| = |x_2|+|y_2| = k$.

This equation identifies the intersection of graph-distance-\(k\) boundary nodes centered at \((0,0)\) and \((x,y)\).

The following subsection gives a formal existence proof for the two-fault case.

\subsection{Existence of a Common Distance-$k$ Node}

We now prove that for any two faulty nodes in a dense Gaussian network, there always exists a node whose graph distance from both faulty nodes is exactly \(k\).
For notational simplicity, write
\[
G_k = G(k+(k+1)i),
\]
and recall that
\[
N=k^2+(k+1)^2=2k^2+2k+1.
\]
We use the integer-labeling function \(\phi:G_k\rightarrow \mathbb{Z}_N\) defined in the preliminaries. In particular, for any two nodes \(U,V\in G_k\),
\[
\phi(U-V)\equiv \phi(U)-\phi(V)\pmod N.
\]

\begin{lemma}[Boundary Difference Coverage]
	Let
	\[
	B=\{z\in G_k : d(0,z)=k\}
	\]
	be the set of boundary nodes located at graph distance \(k\) from the origin. Then,
	\[
	\phi(B)-\phi(B)=\mathbb{Z}_N,
	\]
	where subtraction is performed modulo
	\[
	N=2k^2+2k+1.
	\]
\end{lemma}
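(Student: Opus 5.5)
The plan is to make $B$ completely explicit, compute its labels under $\phi$, and then exhibit every residue modulo $N$ as a difference of two such labels using a "consecutive interval minus arithmetic progression" covering argument.

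\textbf{Step 1: describe $B$.} The diamond $\{(x,y): |x|+|y|\le k\}$ contains exactly $2k^2+2k+1=N$ lattice points. These points are the canonical representatives of $\mathbb{Z}[i]_\pi$, and on them $d(0,(x,y))=|x|+|y|$. Hence
\[
B=\{(x,y): |x|+|y|=k\},
\]
consisting of $4k$ nodes lying on four edges. I will take the fact that the diamond representatives realize graph distance $|x|+|y|$ as part of the background on dense Gaussian networks.

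\textbf{Step 2: compute $\phi$ on the four edges.} Parametrize each edge by $t\in\{0,\dots,k\}$ and use $\phi(x+yi)=kx+(k+1)y$.
\begin{itemize}
\item The edge $(k-t,t)$ gives labels $k^2+t$. This is the interval $A=\{k^2,\dots,k^2+k\}$.
\item The opposite edge $(-(k-t),-t)$ gives $-A$. Since $-k^2\equiv k^2+2k+1 \pmod N$, we have $-A=\{k^2+k+1,\dots,k^2+2k+1\}$.
\item Together, $A\cup(-A)$ is the block of $2k+2$ consecutive residues
\[
I=\{k^2,\dots,k^2+2k+1\}.
\]
\item The edge $(-t,k-t)$ gives $k^2+k-(2k+1)t$, and the edge $(t,-(k-t))$ gives the negatives of these labels. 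Reducing with $-k^2\equiv k^2+2k+1$, these two edges contain the arithmetic progression
\[
P=\{k^2+(2k+1)m : m=0,1,\dots,k+1\}.
\]
Here $m=0$ is the corner $(k,0)$, shared with $A$, and $m\ge1$ comes from the edge $(-t,k-t)$.
\end{itemize}
These reductions are routine modular arithmetic and I will only spot-check them, for example against $G(3+4i)$.

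\textbf{Step 3: the covering argument.} For a fixed $m$, the differences $I-(k^2+(2k+1)m)$ form the consecutive block
\[
[-(2k+1)m,\; 2k+1-(2k+1)m].
\]
Each block has length $2k+2$, one more than the step $2k+1$. So consecutive blocks, for $m$ and $m+1$, share an endpoint, and the union over $m=0,\dots,k+1$ is the contiguous integer interval
\[
[-(2k+1)(k+1),\; 2k+1].
\]
This interval contains $(2k+1)(k+2)+1=2k^2+5k+3\ge N$ consecutive integers, so it meets every residue class modulo $N$. Since $I\cup P\subseteq\phi(B)$, it follows that $\phi(B)-\phi(B)=\mathbb{Z}_N$.

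\textbf{Main obstacle.} I expect the only delicate point to be Step 2. The sign and modular reductions must be done correctly so that the two "horizontal" edges merge into one consecutive block $I$, and the two "diagonal" edges produce a single progression $P$ with step exactly $2k+1$. That step must not exceed $|I|-1$, or gaps would appear between consecutive blocks in Step 3. Once these two shapes are established, the covering in Step 3 is immediate.
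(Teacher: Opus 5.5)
Your proposal is correct and follows essentially the same route as the paper. Two opposite edges give the block of $2k+2$ consecutive labels $\{k^2,\dots,k^2+2k+1\}$, a diagonal edge gives a progression with step $2k+1$, and the translated blocks abut to cover a full period modulo $N$; the only cosmetic difference is that you rewrite the progression as $k^2+(2k+1)m$ using $(2k+1)(k+1)\equiv k \pmod N$, whereas the paper subtracts $k(k+1)-(2k+1)t$ directly and gets the touching intervals $[(2k+1)t-k,\,(2k+1)t+k+1]$.
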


\begin{proof}
	Consider the boundary side
	\[
	(x,y)=(t,k-t), \qquad 0\le t\le k.
	\]
	These nodes belong to $B$, and their integer labels are
	\[
	\phi(t,k-t)=kt+(k+1)(k-t).
	\]
	Simplifying gives
	\[
	\phi(t,k-t)=k(k+1)-t.
	\]
	Thus, this side produces the consecutive labels
	\[
	C_1=\{k^2,k^2+1,\ldots,k^2+k\}.
	\]
	
	Now consider the opposite boundary side
	\[
	(x,y)=(-t,-(k-t)), \qquad 0\le t\le k.
	\]
	Its labels are
	\[
	\phi(-t,-(k-t))=-kt-(k+1)(k-t).
	\]
	Modulo $N$, these become
	\[
	C_2=\{k^2+k+1,k^2+k+2,\ldots,k^2+2k+1\}.
	\]
	
	Combining both sides gives the consecutive block
	\[
	C=\{k^2,k^2+1,\ldots,k^2+2k+1\}\subseteq \phi(B).
	\]
	
	Next consider the boundary side
	\[
	(x,y)=(-t,k-t), \qquad 0\le t\le k.
	\]
	Its labels are
	\[
	\phi(-t,k-t)=k(k+1)-(2k+1)t.
	\]
	Define
	\[
	P=\{k(k+1)-(2k+1)t : 0\le t\le k\}.
	\]
	Since these labels also belong to boundary nodes,
	\[
	P\subseteq \phi(B).
	\]
	
	Now consider all differences $C-P$. For a fixed $t$, subtract
	\[
	k(k+1)-(2k+1)t
	\]
	from every element of $C$. The resulting interval is
	\[
	[(2k+1)t-k,\,(2k+1)t+k+1].
	\]
	
	For consecutive values of $t$, these intervals touch each other without gaps. The final interval endpoint equals
	\[
	(2k+1)k+k+1=2k^2+2k+1=N.
	\]
	
	Therefore, the union of all intervals covers every residue class modulo $N$. Hence,
	\[
	C-P=\mathbb{Z}_N.
	\]
	
	Since both $C$ and $P$ are subsets of $\phi(B)$, it follows that
	\[
	\phi(B)-\phi(B)=\mathbb{Z}_N.
	\]
\end{proof}

\begin{theorem}
	Let $F_1$ and $F_2$ be any two faulty nodes in the dense Gaussian network $G_k$. Then there exists a node $NS$ such that
	\[
	d(F_1,NS)=k
	\]
	and
	\[
	d(F_2,NS)=k.
	\]
\end{theorem}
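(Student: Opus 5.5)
The plan is to reduce the statement to the Boundary Difference Coverage lemma, using vertex-transitivity of $G_k$ (translation by a Gaussian integer modulo $\pi$ is a graph automorphism). The key observation is that $d(F,NS)=k$ holds exactly when $NS-F \in B$. This is because $d(U,V)=d(0,V-U)$, which follows since adjacency depends only on the difference $\beta-\alpha \equiv \pm1,\pm i \pmod{\pi}$. So we seek $NS$ with $NS-F_1\in B$ and $NS-F_2\in B$.

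Set $D=F_2-F_1 \bmod \pi$. By the lemma, $\phi(B)-\phi(B)=\mathbb{Z}_N$, so there exist $b_1,b_2\in B$ with
\[
\phi(b_1)-\phi(b_2)\equiv \phi(D)\pmod N.
\]
Since $\phi$ is a bijection $G_k\to\mathbb{Z}_N$ that respects differences, we get $b_1-b_2\equiv D \pmod{\pi}$. Now define $NS=F_1+b_1 \bmod \pi$. Then $NS-F_1=b_1\in B$, and
\[
NS-F_2 \equiv F_1+b_1-F_2 \equiv b_1-D \equiv b_2 \pmod{\pi},
\]
which also lies in $B$. Hence $d(F_1,NS)=d(F_2,NS)=k$.

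The only step needing care, and where I expect the main obstacle, is justifying that $\phi$ is additive and bijective as a map from $\mathbb{Z}[i]_\pi$ to $\mathbb{Z}_N$. Additivity is clear for the linear form $kx+(k+1)y$. Well-definedness modulo $\pi$ holds because $\phi(\pi)=k^2+(k+1)^2=N\equiv 0$ and $\phi(i\pi)=-(k+1)k+k(k+1)=0$, so the lattice generated by $\pi$ and $i\pi$ maps to $0$. Both sets have size $N$, and $\phi(1)=k$ is a unit modulo $N$, so $\phi$ is surjective and therefore bijective. With this in hand, the translation-invariance of distance and the lemma finish the proof. I would also note that the case $F_1=F_2$ is covered, since $0\in\phi(B)-\phi(B)$ trivially.
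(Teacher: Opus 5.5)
Your proposal is correct and follows the paper's proof essentially step for step. The paper likewise uses the Boundary Difference Coverage Lemma to write $\phi(F_2)-\phi(F_1)\equiv\phi(U)-\phi(V)$ with $U,V\in B$, sets $\phi(NS)\equiv\phi(F_1)+\phi(U)$, derives $\phi(NS)-\phi(F_2)\equiv\phi(V)$, and invokes translation invariance for both distances; your additional check that $\phi$ is well defined modulo $\pi$ (via $\phi(\pi)=\phi(i\pi)=0$) and bijective (since $\gcd(k,N)=1$) fills in a detail the paper simply takes from the cited isomorphism.
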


\begin{proof}
	Let
	\[
	A\equiv \phi(F_2)-\phi(F_1)\pmod N.
	\]
	
	By the Boundary Difference Coverage Lemma, there exist two boundary nodes $U,V\in B$ such that
	\[
	A\equiv \phi(U)-\phi(V)\pmod N.
	\]
	
	Choose
	\[
	\phi(NS)\equiv \phi(F_1)+\phi(U)\pmod N.
	\]
	
	Since $U\in B$, translation invariance of the Gaussian network implies
	\[
	d(F_1,NS)=k.
	\]
	
	Now,
	\[
	\phi(NS)-\phi(F_2) \equiv \phi(F_1)+\phi(U)-\phi(F_2) \pmod N.
	\]
	
	Using
	\[
	A\equiv \phi(F_2)-\phi(F_1)\pmod N,
	\]
	we obtain
	\[
	\phi(NS)-\phi(F_2) \equiv \phi(U)-A \pmod N.
	\]
	
	Since
	\[
	A\equiv \phi(U)-\phi(V)\pmod N,
	\]
	it follows that
	\[
	\phi(NS)-\phi(F_2) \equiv \phi(V) \pmod N.
	\]
	
	Because $V\in B$,
	\[
	d(F_2,NS)=k.
	\]
	
	Therefore,
	\[
	d(F_1,NS)=d(F_2,NS)=k.
	\]
\end{proof}

The theorem guarantees that a valid re-rooted source node always exists for any pair of faulty nodes in the degree-4 dense Gaussian network. This justifies the proposed two-fault re-rooting strategy. However, the same guarantee cannot be extended to arbitrary triples of faulty nodes, as shown by the following counterexample.

\begin{proposition}
	\label{prop:threefault}
	The two-fault re-rooting guarantee does not generally extend to three faulty nodes.
\end{proposition}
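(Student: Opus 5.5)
The plan is to exhibit an explicit triple of nodes in $G(3+4i)$ for which no node lies at graph distance $k=3$ from all three; by translation invariance this amounts to a finite check in $\mathbb{Z}_{25}$. I work throughout with the labeling $\phi(x+yi)\equiv 3x+4y \pmod{25}$ from the preliminaries. As in the proof of the Theorem, $d(F,NS)=3$ holds exactly when $\phi(NS)-\phi(F)\in\phi(B)$. Hence a node $NS$ at distance $3$ from $F_1,F_2,F_3$ exists if and only if
\[
\bigl(\phi(F_1)+\phi(B)\bigr)\cap\bigl(\phi(F_2)+\phi(B)\bigr)\cap\bigl(\phi(F_3)+\phi(B)\bigr)\neq\emptyset .
\]
It therefore suffices to find three labels whose translates of $\phi(B)$ have empty common intersection.

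\textbf{Step 1: compute $\phi(B)$.} The twelve nodes with $|x|+|y|=3$ are $(\pm3,0)$, $(0,\pm3)$, $(\pm2,\pm1)$ and $(\pm1,\pm2)$. Their labels give
\[
\phi(B)=\{2,5,9,10,11,12,13,14,15,16,20,23\},
\]
with complement $\mathbb{Z}_{25}\setminus\phi(B)=\{0,1,3,4,6,7,8,17,18,19,21,22,24\}$.

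\textbf{Step 2: fix $F_1$ and $F_2$.} Take $F_1=(0,0)$ with label $0$, and choose $F_2$ so that $\phi(B)\cap(\phi(F_2)+\phi(B))$ is small. With $\phi(F_2)=12$, i.e.\ $F_2=(0,3)$, the translate is
\[
12+\phi(B)=\{14,17,21,22,23,24,0,1,2,3,7,10\},
\]
and its intersection with $\phi(B)$ is $I=\{2,10,14,23\}$.

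\textbf{Step 3: find $F_3$ (the main obstacle).} We need $b$ with $(b+\phi(B))\cap I=\emptyset$, i.e.\ $I-b$ contained in the complement of $\phi(B)$. Equivalently, we need $x$ with $x,\;x+8,\;x+12,\;x+21$ all in the complement. Scanning the complement, $x=21$ works: the four values are $21,4,8,17$. This gives $b=2-21\equiv 6$, which is the label of $F_3=(2,0)$. One checks directly that
\[
6+\phi(B)=\{8,11,15,16,17,18,19,20,21,22,1,4\}
\]
is disjoint from $I$. So no $NS$ exists for the faults $(0,0)$, $(0,3)$, $(2,0)$ in $G(3+4i)$, and Proposition~\ref{prop:threefault} follows.

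If one also wants every fault to be interior with respect to some source $S$, this is easy to arrange. Take $S$ at distance less than $3$ from all three faults, for example $S=(1,1)$, and verify the distances using the wrap-around. This does not affect the conclusion.
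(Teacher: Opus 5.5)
Your main argument is correct and is essentially the paper's proof. Your faults with labels $0$, $12$, $6$ (i.e.\ $(0,0)$, $(0,3)$, $(2,0)$) are exactly the paper's triple $F_1=0$, $F_2=6$, $F_3=12$. You verify the same empty intersection of translates of $\phi(B)$; you only organize it differently, first computing $\phi(B)\cap(12+\phi(B))=\{2,10,14,23\}$ and then finding a translate that misses it.

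Your closing aside, however, is false and should be removed. $S=(1,1)$ is at distance $3$ from $(0,3)$, since $(0,3)-(1,1)=(-1,2)$. In fact no non-faulty node lies within distance $2$ of all three of these faults.

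The aside is also unnecessary. For any operational source $S$, at least one fault must be at distance less than $3$ from $S$, since otherwise $S$ itself would be a common distance-$3$ node. So re-rooting is triggered and fails for every choice of $S$.
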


\begin{proof}
	Consider the dense Gaussian network \(G(3+4i)\), where \(k=3\). In this case,
	\[
	N=k^2+(k+1)^2=3^2+4^2=25.
	\]
	Using the integer-labeling function \(\phi\), the graph-distance-\(3\) boundary around node \(0\) is
	\[
	B_3(0)=\{2,5,9,10,11,12,13,14,15,16,20,23\}.
	\]
	This set contains all nodes whose graph distance from node \(0\) is exactly \(3\).
	
	Now choose three faulty nodes with integer labels
	\[
	F_1=0,\qquad F_2=6,\qquad F_3=12.
	\]
	Since the network is vertex-transitive, the graph-distance-\(3\) boundary around a node \(F\) is obtained by shifting \(B_3(0)\) by \(F\) modulo \(25\). Therefore,
	\[
	B_3(F_1)=F_1+B_3(0) =\{2,5,9,10,11,12,13,14,15,16,20,23\},
	\]
	\[
	B_3(F_2)=F_2+B_3(0) =\{1,4,8,11,15,16,17,18,19,20,21,22\},
	\]
	and
	\[
	B_3(F_3)=F_3+B_3(0) =\{0,1,2,3,7,10,14,17,21,22,23,24\}.
	\]
	All additions are performed modulo \(25\). A valid re-rooted source for the three faulty nodes would need to belong to all three graph-distance-\(3\) boundaries. However,
	\[
	B_3(F_1)\cap B_3(F_2)\cap B_3(F_3)=\emptyset.
	\]
	Thus, there is no node \(NS\) satisfying
	\[
	d(NS,F_1)=d(NS,F_2)=d(NS,F_3)=3.
	\]
	Consequently, no common distance-\(k\) re-rooted source exists for this three-fault configuration. Hence, the existence guarantee proved for two faulty nodes cannot be extended to arbitrary triples of faulty nodes.
\end{proof}

This counterexample explains why the present work focuses on one- and two-node fault scenarios. The proposed method is not claimed to solve arbitrary three-fault configurations in degree-4 dense Gaussian networks. Instead, the result establishes a sharp and well-defined reliability guarantee: for any one or two faulty nodes, a valid re-rooted source always exists, whereas for three faulty nodes such a source may not exist.

We now present a simple algorithm that iterates over all nodes on the graph-distance-\(k\) boundary of \((0,0)\) and returns the first node that also satisfies the graph-distance-\(k\) condition with \((x,y)\). To generalize for arbitrary faulty nodes $(a,b)$ and $(c,d)$, we use the coordinate transformation.

\begin{algorithm}[H]
	\caption{AlgorithmAux -- input: $(x,y)$}
	\begin{algorithmic}[1]
		\FOR{$r := -k$ to $k$}
		\STATE $s := k-|r|$
		\IF{$\text{distance}((r,s),(x,y)) = k$}
		\STATE \textbf{return} $(r,s)$
		\ENDIF
		\IF{$\text{distance}((r,-s),(x,y)) = k$}
		\STATE \textbf{return} $(r,-s)$
		\ENDIF
		\ENDFOR
	\end{algorithmic}
	\label{alg:aux}
\end{algorithm}

\begin{algorithm}[H]
	\caption{AlgorithmTwoFailureFindingNS -- input: $(a,b)$, $(c,d)$}
	\begin{algorithmic}[1]
		\STATE $(p,q) := ((a,b)-(c,d)) \bmod \pi$
		\STATE $(r,s) := \text{Aux}(p,q)$
		\STATE \textbf{return} $((r,s)+(c,d)) \bmod \pi$
	\end{algorithmic}
	\label{alg:twofail}
\end{algorithm}

\begin{figure}[H]
	\centering
	\begin{subfigure}{0.35\linewidth}
		\centering
		\includegraphics[width=\linewidth]{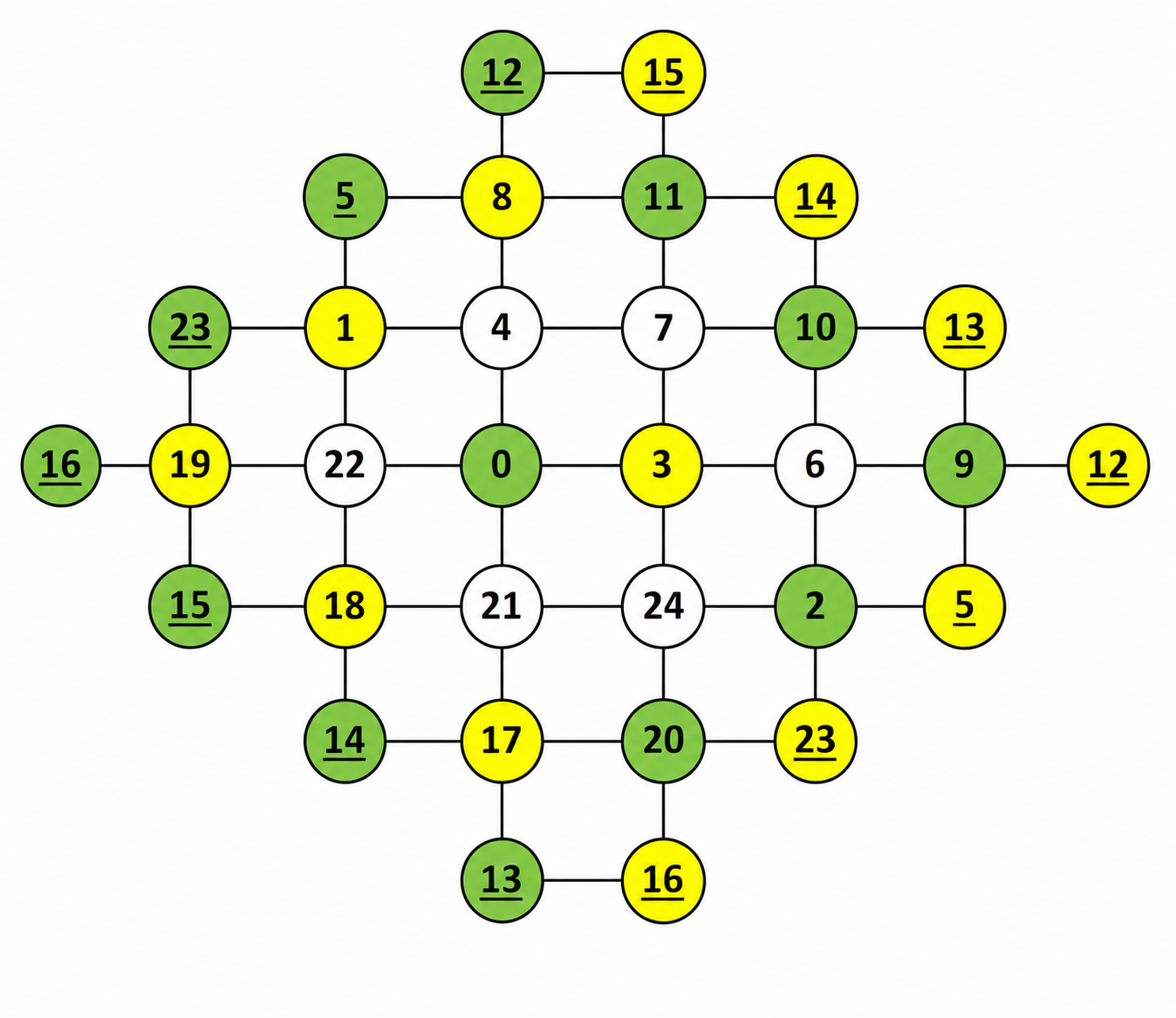}
		\caption{}
	\end{subfigure}
	\hfill
	\begin{subfigure}{0.35\linewidth}
		\centering
		\raisebox{3mm}{
			\includegraphics[width=\linewidth]{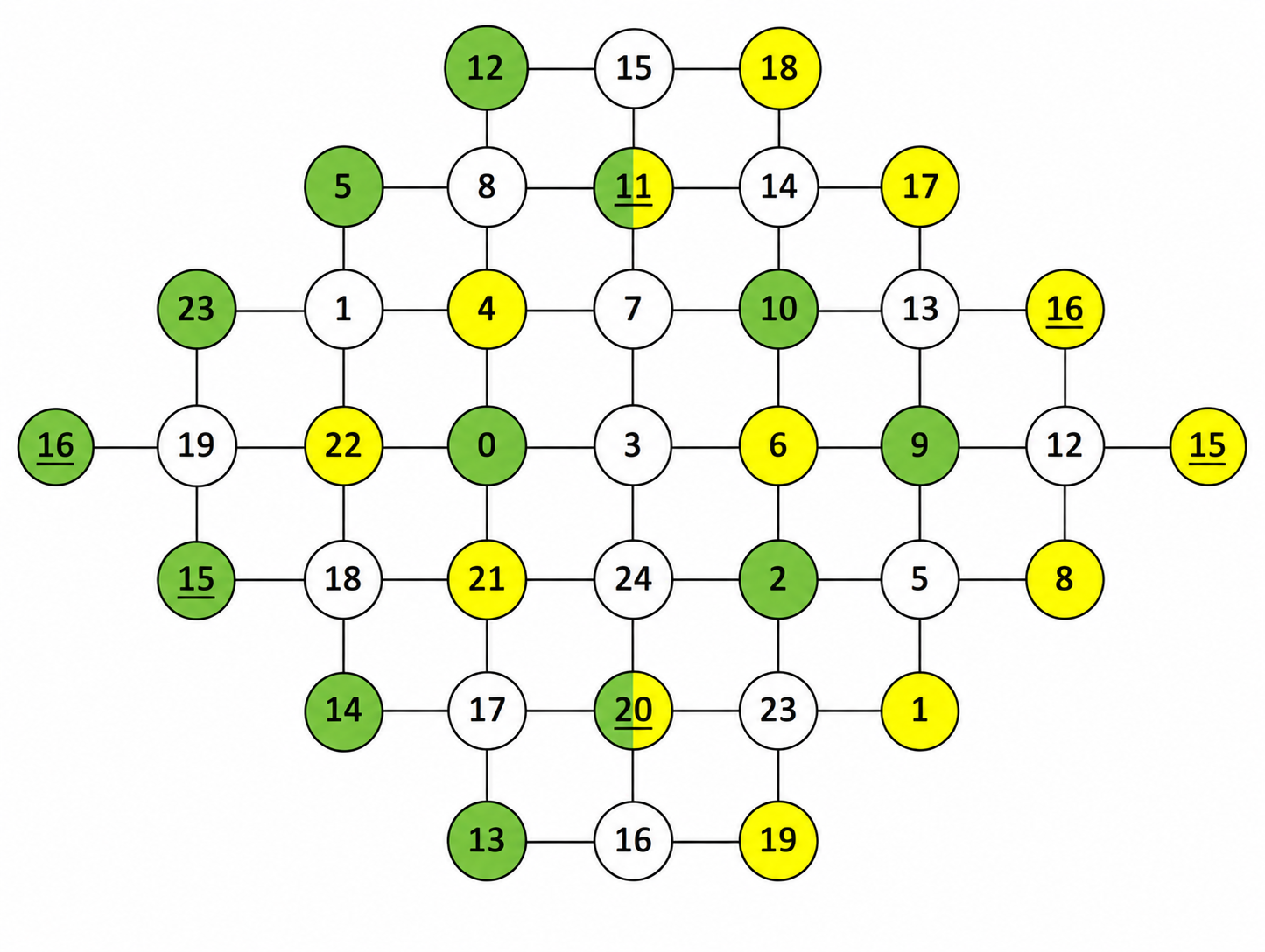}
		}		
		\caption{}
	\end{subfigure}
	\caption{Finding the intersection nodes between nodes at graph distance $k$ from node $0$ and nodes at distance $k$ from (a) node $3$ and (b) node $6$.}
	\label{fig:fig4}
\end{figure}

The transformation procedure described in Algorithm~\ref{alg:twofail} can be understood in three steps and exploits the symmetry of the network.

Fig.~\ref{fig:fig4} illustrates the intersection between graph-distance-\(k\) boundary sets centered at node \(0\) and at another given node. For the \(k=3\) network, Fig.~\ref{fig:fig4}(a) corresponds to the case where the second node is node \(3\), whereas Fig.~\ref{fig:fig4}(b) corresponds to node \(6\). The underlined nodes represent valid solutions whose graph distance from both nodes is exactly \(k\).

For example, in Fig.~\ref{fig:fig4}(b), the second node is node \(6\) in \(G(3+4i)\), and the common graph-distance-\(3\) solutions are nodes \(11\), \(15\), \(16\), and \(20\). The solution node \(11\) satisfies (5) by taking \(x_1=1\), \(y_1=2\), \(x_2=-1\), \(y_2=2\), and \(z=6\). Indeed,
\[
3(1)+4(2)\equiv 6+3(-1)+4(2)\pmod{25}.
\]
Both sides are congruent to \(11\) modulo \(25\). Thus, node \(11\) is a concrete solution belonging to the intersection of the two graph-distance-\(3\) boundary sets.

Fig.~\ref{fig:fig5} shows the number of possible solutions for each node in networks with $k=4$ and $k=5$. The value shown in brackets represents the number of nodes satisfying the graph-distance-$k$ condition from both node $0$ and the corresponding node. The green cells represent nodes with $2k+1$ solutions, the yellow cells represent nodes with four solutions, and the white cells represent intermediate cases with a number of solutions between $4$ and $2k$.

\begin{figure}[H]
	\centering
	\begin{subfigure}{0.45\linewidth}
		\centering
		\includegraphics[width=\linewidth]{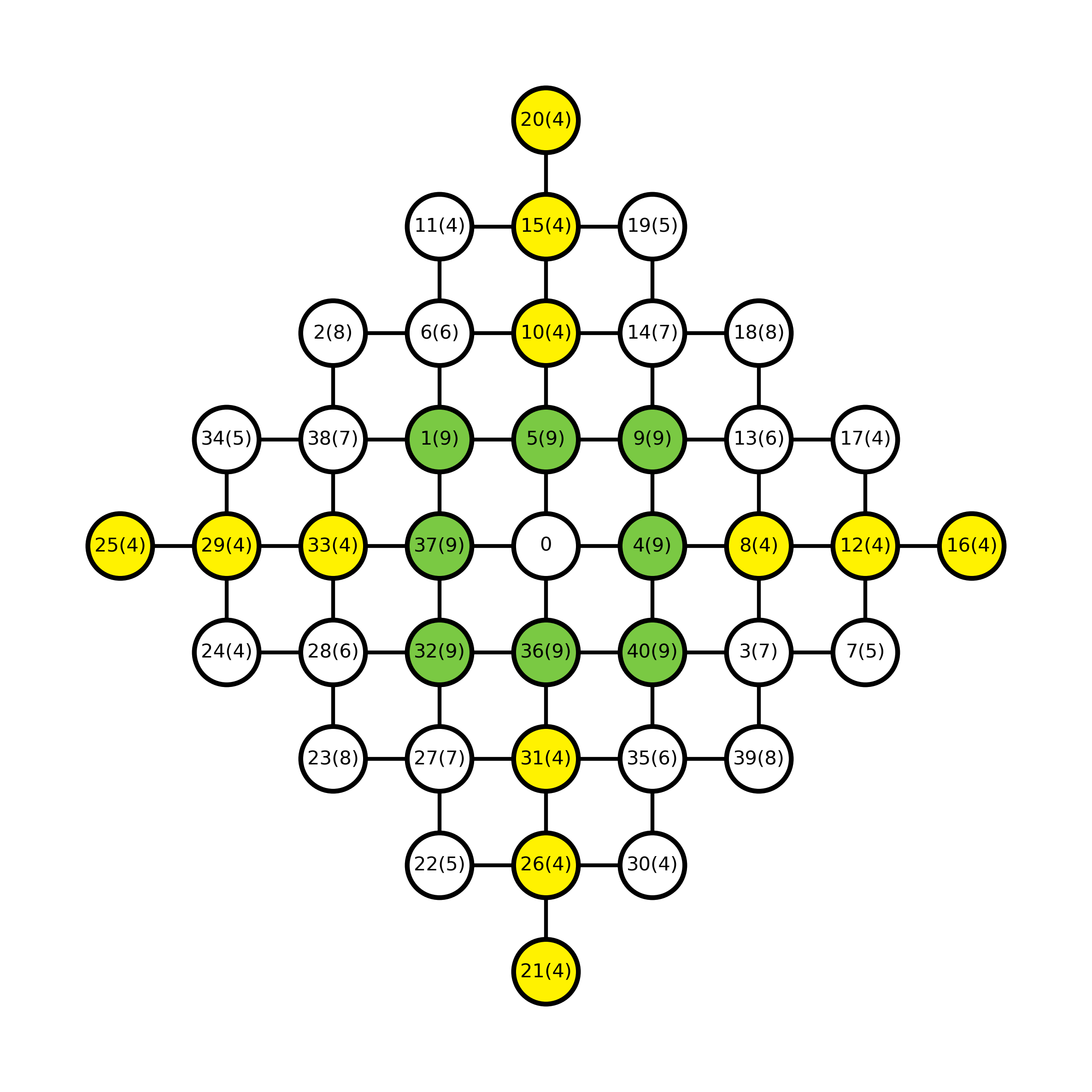}
		\caption{$k=4$}
	\end{subfigure}
	\hfill
	\begin{subfigure}{0.45\linewidth}
		\centering
		\includegraphics[width=\linewidth]{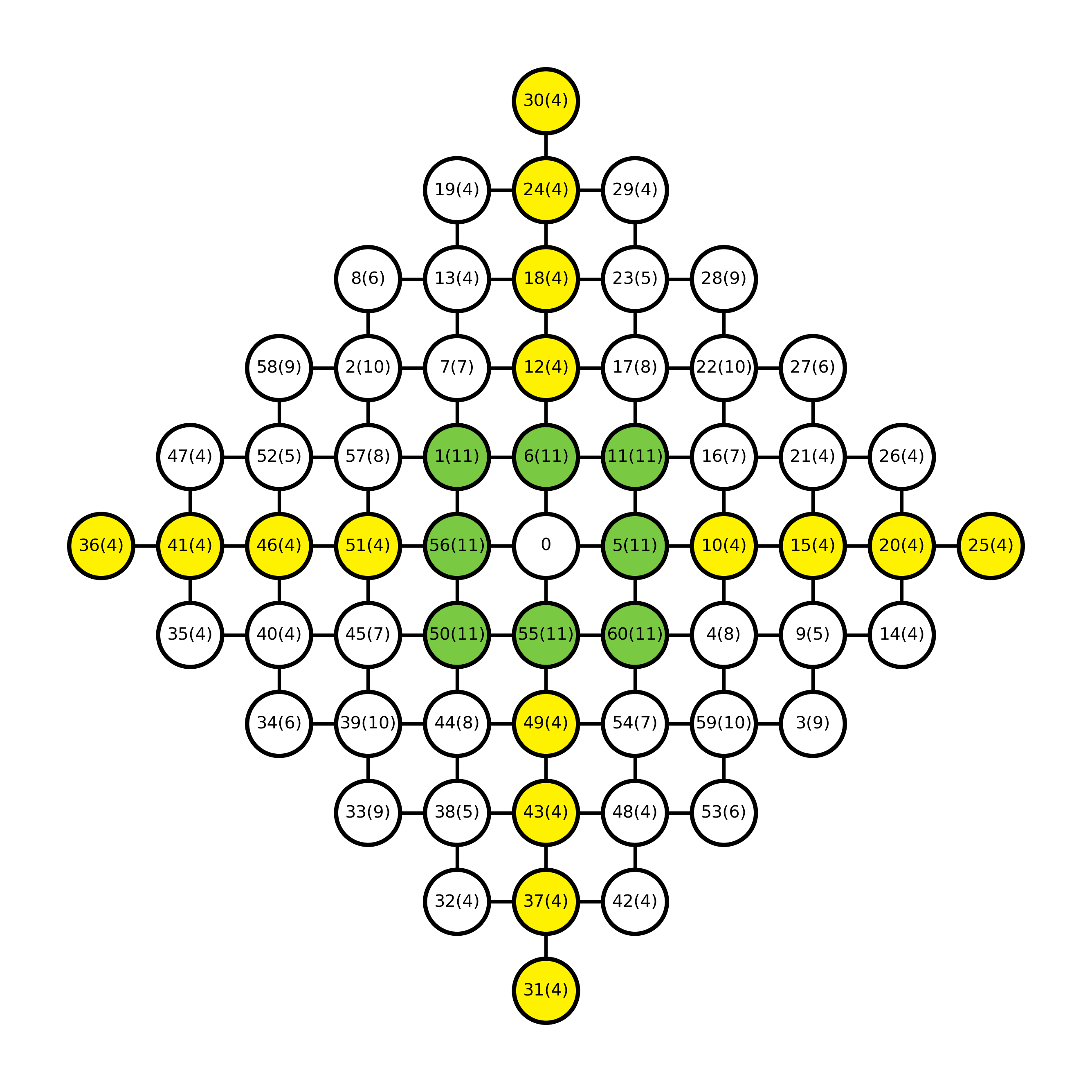}
		\caption{$k=5$}
	\end{subfigure}
	\caption{Number of solutions for two faulty nodes.}
	\label{fig:fig5}
\end{figure}

\textbf{Example:} Assume that the two faulty nodes are $(0,2)$ and $(-1,-1)$ in a $k=3$ network. The equivalent circulant node numbers are $8$ and $18$, respectively. Using the vertex symmetry of the network, the first step computes the relative position of the first node with respect to the second node:
\[
(8-18) \bmod 25 = -10 \bmod 25 = 15.
\]
In the Gaussian integer representation, this corresponds to $(-2,-1)$. Therefore, the auxiliary algorithm is called with $(-2,-1)$ as input.

The auxiliary algorithm searches the nodes at distance $k$ from $(0,0)$ and returns the first node that is also at distance $k$ from $(-2,-1)$. In this case, the returned node is $(0,-3)$.

Finally, the coordinate transformation is reversed:
\begin{align*}
	((0,-3)+(-1,-1)) \bmod (3,4) \\
	= (-1,-4) \bmod (3,4) \\
	= (2,0).
\end{align*}
Thus, $(2,0)$ is selected as the new source node NS.

\subsection{Complexity and Communication Cost Analysis}

The proposed fault-tolerant broadcasting method consists of two phases:
\begin{enumerate}
	\item selecting a new source node \(NS\), and
	\item transmitting the message from the original source \(S\) to \(NS\), followed by the standard one-to-all broadcast from \(NS\).
\end{enumerate}

For the single-fault case, Algorithm~\ref{alg:onefail} scans candidate nodes on the graph-distance-\(k\) boundary. Since the loop ranges over a linear number of boundary candidates, the number of inspected positions is \(O(k)\). Therefore, the source-selection complexity is
\[
T_1(k)=O(k).
\]

For the two-fault case, Algorithm~\ref{alg:twofail} first applies a constant-time coordinate transformation and then calls Algorithm~\ref{alg:aux}. The auxiliary algorithm scans candidate positions on the graph-distance-\(k\) boundary, whose size is linear in \(k\). Hence, the worst-case source-selection complexity is
\[
T_2(k)=O(k).
\]
Thus, the proposed source-selection procedure remains linear in the network diameter.

The standard one-to-all broadcasting algorithm in the dense Gaussian network completes in \(k\) parallel communication steps. This is because the broadcast expands simultaneously through the four directional regions of the network. Therefore, in the absence of faults, the broadcast time is
\[
B_{\text{baseline}}(k)=k.
\]

The proposed re-rooting method adds only one preliminary one-to-one relocation phase from the original source \(S\) to the selected new source \(NS\). Since the diameter of \(G(k+(k+1)i)\) is \(k\), the relocation distance satisfies
\[
d(S,NS)\leq k.
\]
Thus, the additional relocation time is bounded by
\[
R(k)\leq k.
\]

After the message reaches \(NS\), the same standard one-to-all broadcasting algorithm is executed without modification. Therefore, the total worst-case broadcast time of the proposed method is
\[
B_{\text{total}}(k) = R(k)+B_{\text{baseline}}(k).
\]
Since \(R(k)\leq k\) and \(B_{\text{baseline}}(k)=k\), we obtain
\[
B_{\text{total}}(k)\leq 2k.
\]

Thus, the proposed method preserves the linear-time broadcast behavior of the original algorithm. The fault-free broadcast completes in \(k\) parallel steps, while the fault-tolerant re-rooted broadcast completes in at most \(2k\) steps in the worst case. The additional delay is only the one-to-one relocation step, and this delay is bounded by the network diameter.

In terms of computation, the new source node is found in \(O(k)\) time for both one- and two-fault cases. In terms of communication time, the additional overhead is at most \(k\) steps. Therefore, the proposed method achieves one- and two-node fault tolerance without constructing redundant spanning trees, backup broadcast structures, or multiple disjoint routing paths.

\section{Experimental Evaluation}

In this section, we evaluate the performance and scalability of the proposed re-rooting-based fault-tolerant broadcasting approach through simulations. The goal is to compare the reliability and efficiency of the proposed method with those of the baseline one-to-all broadcasting algorithm described in Section~IV. The evaluation considers both one-node and two-node failure scenarios across multiple network sizes and fault-placement modes.

\subsection{Experimental Setup}

We conducted simulations on dense Gaussian networks of varying sizes corresponding to
\[
k = 10,25,50,100,200,
\]
resulting in networks with
\[
N=k^2+(k+1)^2
\]
nodes. Thus, the tested network sizes were \(221\), \(1301\), \(5101\), \(20201\), and \(80401\) nodes, respectively. These sizes allow us to evaluate the scalability of the proposed re-rooting strategy from small networks to large dense Gaussian networks.

For each network size, experiments were conducted under two node-failure scenarios:
\begin{itemize}
	\item \textbf{Single-node failure}: one faulty node is present in the network.
	\item \textbf{Two-node failures}: two distinct faulty nodes are present in the network.
\end{itemize}

For each value of \(k\) and each fault scenario, we tested four fault-placement modes:
\begin{itemize}
	\item \textbf{Random}: faulty nodes are selected uniformly at random.
	\item \textbf{Near-source}: faulty nodes are selected close to the original source.
	\item \textbf{Critical-position}: faulty nodes are selected from positions likely to interfere with broadcast forwarding.
	\item \textbf{Close-pair}: faulty nodes are selected from locally clustered regions; in the two-fault case, this corresponds to selecting faulty nodes with small mutual graph distance.
\end{itemize}

For each fault-placement mode, \(250\) independent trials were performed. Therefore, each \(k\)-and-fault configuration contains \(1000\) trials in total, aggregated across the four modes. Since five values of \(k\) and two fault scenarios were tested, the full experiment contains \(10000\) trials.

We evaluated two broadcasting approaches:
\begin{itemize}
	\item \textbf{Baseline broadcasting}: the original one-to-all broadcasting algorithm without fault tolerance.
	\item \textbf{Proposed re-rooting method}: the dynamic source-relocation strategy followed by the standard one-to-all broadcasting algorithm.
\end{itemize}

The following performance metrics were measured:
\begin{itemize}
	\item \textbf{Broadcast success rate}: the percentage of trials in which all non-faulty nodes receive the broadcast message.
	\item \textbf{Reachability}: the average number of non-faulty nodes reached by the broadcast.
	\item \textbf{Reachability standard deviation}: the standard deviation of the number of reached non-faulty nodes across trials.
	\item \textbf{Relocation distance}: the number of hops required to transmit the message from the original source to the selected new source.
	\item \textbf{Total broadcast steps}: the sum of the relocation distance and the \(k\)-step one-to-all broadcast from the new source.
	\item \textbf{Runtime}: the computation time required to identify the new source node.
\end{itemize}

For each metric, we report the average value over the aggregated trials. The standard deviation is reported separately to show the stability of the baseline and proposed methods under different fault placements.

\subsection{Simulation Procedure and Reproducibility}

For each tested value of \(k\), the simulator first constructs the dense Gaussian network
\[
G(k+(k+1)i)
\]
with
\[
N=k^2+(k+1)^2
\]
nodes. Each node is represented using its integer label in \(\mathbb{Z}_N\), and adjacency is generated according to the dense circulant representation \(C_N(k,k+1)\). Thus, each node \(u\) is connected to
\[
u\pm k \pmod N \quad \text{and} \quad u\pm(k+1)\pmod N.
\]

For each trial, the faulty node set is generated according to the selected fault-placement mode. In the single-fault case, one node is marked as faulty. In the two-fault case, two distinct nodes are marked as faulty. The original source node is fixed as node \(0\), and the same set of faulty nodes is used to evaluate both the baseline and proposed methods.

The baseline method executes the original one-to-all broadcasting algorithm directly from the original source. During propagation, a faulty node can receive the message but cannot forward it to other nodes. Therefore, if a faulty node lies on an internal forwarding position, the broadcast may fail to reach all non-faulty nodes.

The proposed method first computes a new source node \(NS\). For one faulty node, \(NS\) is selected at graph distance \(k\) from the faulty node. For two faulty nodes, \(NS\) is selected so that its graph distance from both faulty nodes is equal to \(k\). The message is then transmitted from the original source to \(NS\), after which the original one-to-all broadcasting algorithm is executed from \(NS\).

A trial is considered successful if every non-faulty node receives the broadcast message. Formally, if \(F\) is the set of faulty nodes and \(R\) is the set of nodes reached by the broadcast, then the trial is successful when
\[
G_k \setminus F \subseteq R.
\]
Equivalently, the number of reached non-faulty nodes must be
\[
N-|F|.
\]

The same success criterion is used for both the baseline and proposed methods. This allows the comparison to isolate the effect of re-rooting while keeping the underlying broadcast procedure unchanged.

\subsection{Results and Analysis}

Table~\ref{tab:results} summarizes the broadcast success rate and average reachability of the baseline and proposed methods. Table~\ref{tab:statistics} reports the standard deviation of the number of reached nodes across the fault-placement experiments. Table~\ref{tab:overhead} reports the communication and computational overhead of the proposed method, including relocation distance, total broadcast steps, and source-selection runtime.

The results show that the proposed re-rooting method achieved \(100\%\) success in every tested configuration. In all cases, it reached exactly \(N-|F|\) nodes, where \(|F|\) is the number of faulty nodes. In contrast, the baseline broadcast success rate decreased as the network size increased and was especially low under two-node failures. This behavior occurs because the baseline method fails whenever a faulty node occupies an internal forwarding position in the original broadcast process.

\begin{table}[H]
	\centering
	\caption{Broadcast performance under one- and two-node failures across all fault-placement modes.}
	\label{tab:results}
	\begin{adjustbox}{max width=\textwidth}
	\begin{tabular}{c c c c c c c c}
		\hline
		\(k\) & Failures & \(N\) & Trials & Baseline Success (\%) & Proposed Success (\%) & Avg. Baseline Reach & Avg. Proposed Reach \\
		\hline
		10  & 1 & 221   & 1000 & 10.2 & 100.0 & 202.243   & 220 \\
		25  & 1 & 1301  & 1000 & 4.9  & 100.0 & 1207.433  & 1300 \\
		50  & 1 & 5101  & 1000 & 2.7  & 100.0 & 4759.311  & 5100 \\
		100 & 1 & 20201 & 1000 & 1.1  & 100.0 & 18929.323 & 20200 \\
		200 & 1 & 80401 & 1000 & 0.6  & 100.0 & 75099.441 & 80400 \\
		\hline
		10  & 2 & 221   & 1000 & 5.3  & 100.0 & 189.966   & 219 \\
		25  & 2 & 1301  & 1000 & 1.6  & 100.0 & 1145.970  & 1299 \\
		50  & 2 & 5101  & 1000 & 1.4  & 100.0 & 4519.830  & 5099 \\
		100 & 2 & 20201 & 1000 & 0.3  & 100.0 & 18090.168 & 20199 \\
		200 & 2 & 80401 & 1000 & 0.3  & 100.0 & 71815.562 & 80399 \\
		\hline
	\end{tabular}
	\end{adjustbox}
\end{table}

\begin{figure}[H]
	\centering
	\includegraphics[width=0.65\linewidth]{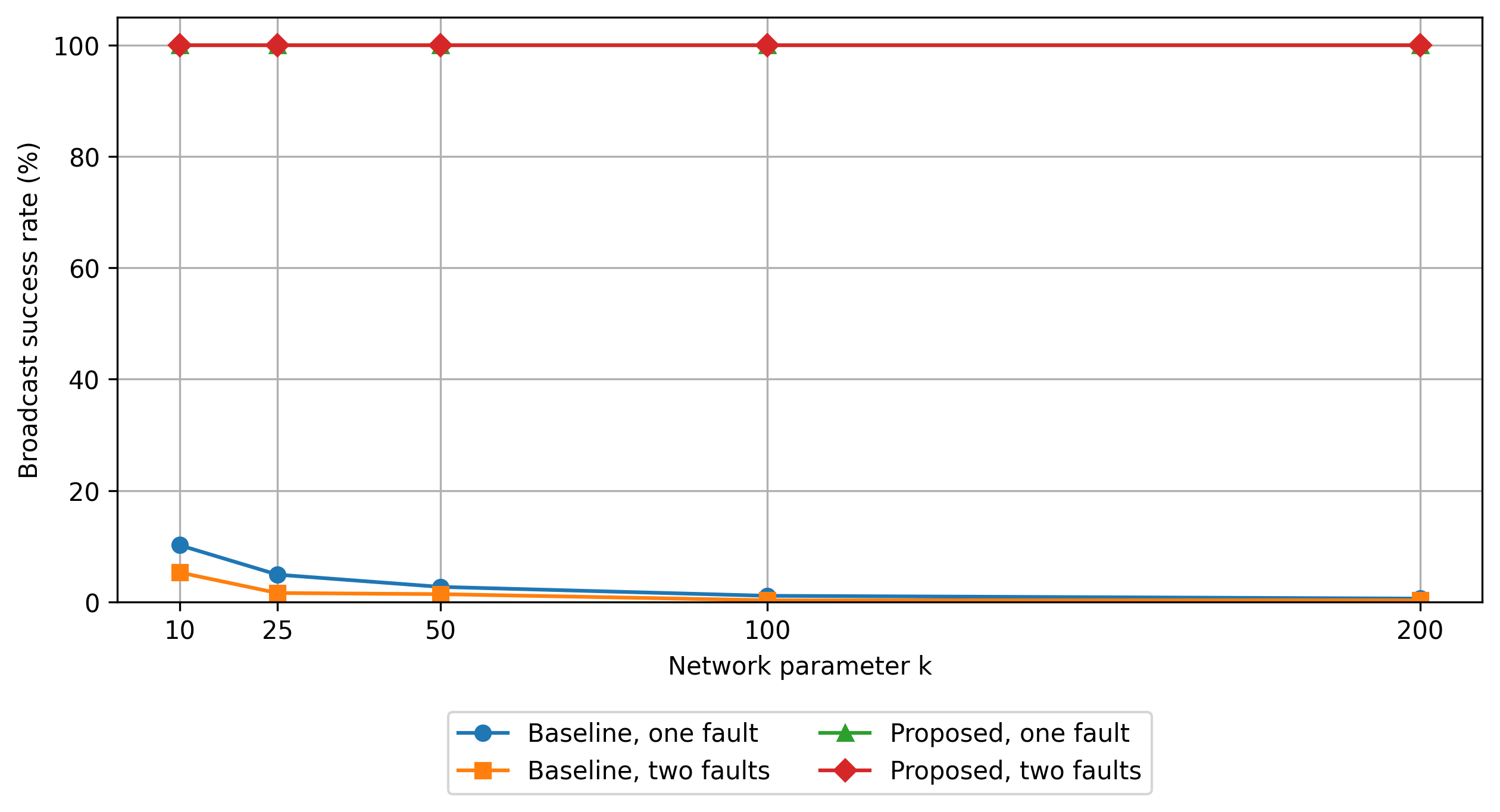}
	\caption{Broadcast success rate versus network parameter \(k\). The proposed re-rooting method maintains \(100\%\) success across all tested network sizes and fault counts, while the baseline success rate decreases as \(k\) increases.}
	\label{fig:success_scalability}
\end{figure}

\begin{table}[t]
	\centering
	\caption{Statistical summary of fault-placement experiments.}
	\label{tab:statistics}
	\begin{adjustbox}{max width=\textwidth}
	\begin{tabular}{c c c c}
		\hline
		\(k\) & Faults & Baseline reach std. dev. & Proposed reach std. dev. \\
		\hline
		10  & 1 & 19.793528    & 0.00 \\
		25  & 1 & 123.037375   & 0.00 \\
		50  & 1 & 489.391615   & 0.00 \\
		100 & 1 & 1941.363752  & 0.00 \\
		200 & 1 & 7940.349833  & 0.00 \\
		\hline
		10  & 2 & 27.497035    & 0.00 \\
		25  & 2 & 176.232139   & 0.00 \\
		50  & 2 & 756.117317   & 0.00 \\
		100 & 2 & 2810.814650  & 0.00 \\
		200 & 2 & 11550.529034 & 0.00 \\
		\hline
	\end{tabular}
\end{adjustbox}
\end{table}

Because each aggregated \(k\)-and-fault configuration contains \(1000\) independent trials, the reported success rates provide stable empirical estimates across the tested fault-placement modes. The proposed method achieved successful delivery in all \(1000\) trials for every configuration, yielding zero observed failures. Using the rule of three for zero-failure binomial observations, the empirical failure probability of the proposed method is below approximately \(0.3\%\) at the \(95\%\) confidence level for each tested configuration. For the baseline method, the variation in reachability is reported using standard deviation, while the success-rate trends provide an empirical indication of how sensitive the original broadcast process is to the placement of faulty nodes.

\begin{table}[H]
\centering
\caption{Communication-step and computational overhead of the proposed re-rooting method.}
\label{tab:overhead}
\begin{adjustbox}{max width=\textwidth}
\begin{tabular}{c c c c c c c}
	\hline
	\(k\) & Failures & \(N\) & Avg. Relocation Hops & Avg. Total Steps & Min--Max Total Steps & Avg. Runtime (ms) \\
	\hline
	10  & 1 & 221   & 8.731   & 18.731  & 10--20   & 0.423555 \\
	25  & 1 & 1301  & 22.651  & 47.651  & 26--50   & 2.572521 \\
	50  & 1 & 5101  & 46.460  & 96.460  & 50--100  & 10.207384 \\
	100 & 1 & 20201 & 91.937  & 191.937 & 103--200 & 42.117149 \\
	200 & 1 & 80401 & 183.832 & 383.832 & 200--400 & 182.397966 \\
	\hline
	10  & 2 & 221   & 7.462   & 17.462  & 10--20   & 0.405332 \\
	25  & 2 & 1301  & 18.585  & 43.585  & 26--50   & 2.698035 \\
	50  & 2 & 5101  & 37.582  & 87.582  & 51--100  & 11.627536 \\
	100 & 2 & 20201 & 74.549  & 174.549 & 102--200 & 42.826660 \\
	200 & 2 & 80401 & 151.404 & 351.404 & 206--400 & 178.875955 \\
	\hline
\end{tabular}
\end{adjustbox}
\end{table}

\begin{figure}[H]
\centering
\includegraphics[width=0.65\linewidth]{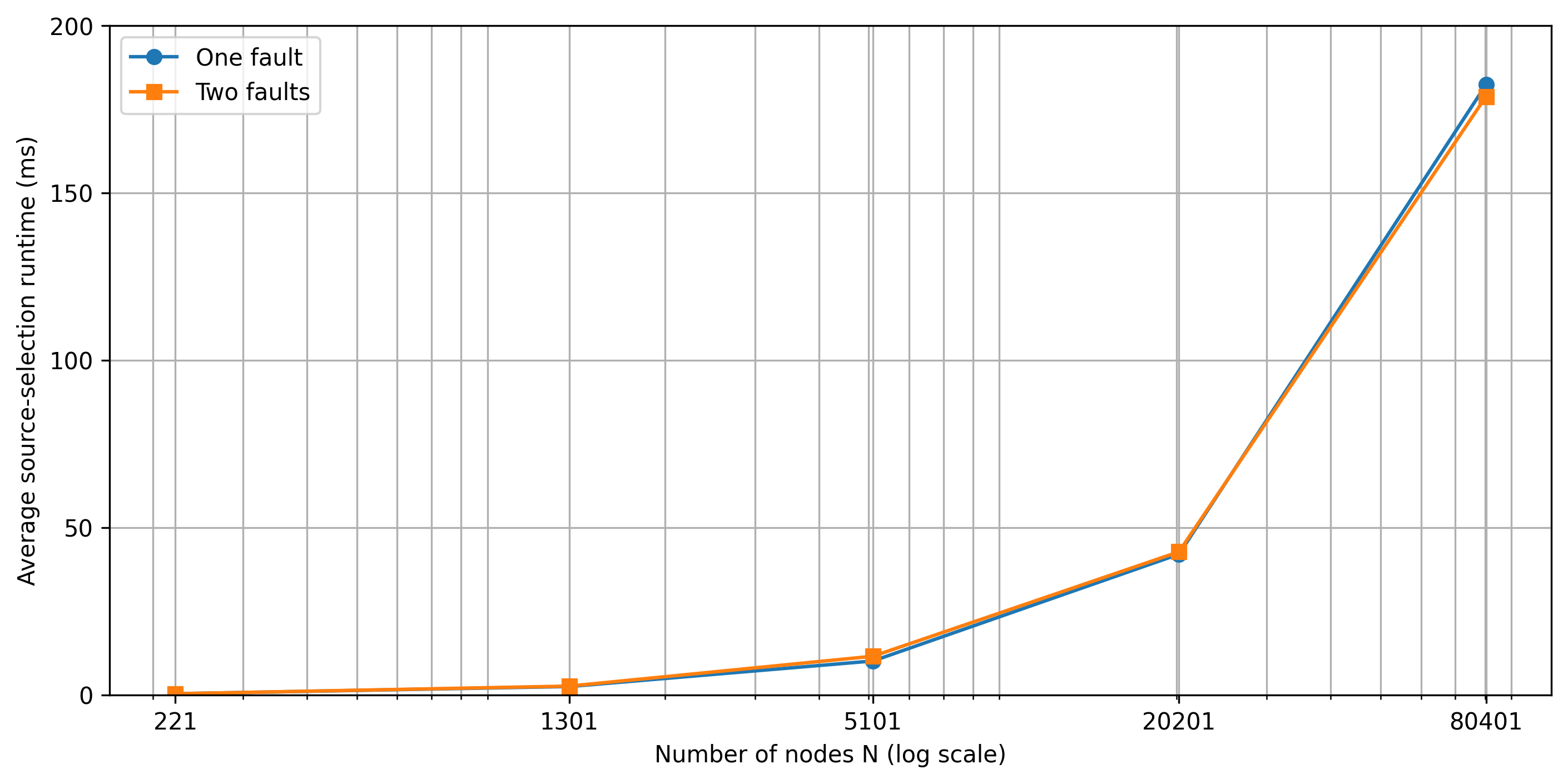}
\caption{Average source-selection runtime versus network size using a logarithmic \(N\)-axis. The proposed method remains computationally lightweight even for the largest tested network with \(80401\) nodes.}
\label{fig:runtime_scalability}
\end{figure}

Figures~\ref{fig:success_scalability} and~\ref{fig:runtime_scalability} illustrate the scalability of the proposed method. Figure~\ref{fig:success_scalability} shows that the proposed re-rooting method maintains complete delivery across all tested values of \(k\), whereas the baseline success rate decreases as the network grows. Figure~\ref{fig:runtime_scalability} shows that the runtime required to identify the new source increases with network size but remains practical even for \(N=80401\). These results support the analytical complexity bound and demonstrate that the proposed method scales to networks containing tens of thousands of nodes.

The statistical results show that the proposed method has zero reachability variation because it reached exactly \(N-|F|\) nodes in every trial. In contrast, the baseline method shows increasing variation as \(k\) grows, reflecting its sensitivity to the location of faulty nodes within the original broadcast structure.

The overhead results show that the proposed method introduces only one preliminary one-to-one relocation phase from the original source to the selected new source. Since the dense Gaussian network has diameter \(k\), the relocation distance is at most \(k\). Therefore, the total broadcast time is bounded by \(2k\): at most \(k\) steps for relocation and exactly \(k\) parallel steps for the one-to-all broadcast from \(NS\). The maximum total-step values in Table~\ref{tab:overhead} confirm this bound for all tested values of \(k\).

The results demonstrate the effectiveness of the proposed approach. While the baseline algorithm suffers significant degradation in the presence of node failures, the proposed method consistently achieves complete delivery across all tested network sizes, failure counts, and fault-placement modes. This confirms the theoretical guarantee that placing faulty nodes at graph distance \(k\) from the new source makes them leaf-level nodes in the broadcast process.

In the baseline approach, even a single faulty node located within an internal forwarding position can disrupt broadcast propagation and lead to incomplete coverage. This effect becomes more pronounced as the network size increases because there are more possible internal forwarding positions where a failure can interrupt message propagation. In contrast, the proposed method neutralizes the impact of the faulty nodes by relocating the broadcast source before the one-to-all broadcast begins.

Overall, the proposed approach achieves a favorable trade-off between reliability and efficiency. It provides deterministic one- and two-node fault tolerance for dense Gaussian networks while preserving the original \(k\)-step parallel broadcasting procedure and adding only a bounded relocation phase.

\subsection{Comparative Discussion with Fault-Tolerant Broadcasting Approaches}

\begin{table}[H]
\centering
\caption{Qualitative comparison with common fault-tolerant broadcasting strategies.}
\label{tab:qualitative_comparison}
\begin{adjustbox}{max width=\textwidth}
\begin{tabular}{l c c c c}
	\hline
	Method & Extra structures & Backup paths & Runtime adaptation & Broadcast algorithm changed \\
	\hline
	Redundant spanning trees & Yes & Yes & Low/Medium & Yes \\
	Multiple-path routing & Yes & Yes & Medium & Yes \\
	Adaptive rerouting & No/Partial & Partial & High & Yes \\
	Local recovery & Partial & Partial & Medium & Yes \\
	Proposed re-rooting & No & No & Low & No \\
	\hline
\end{tabular}
\end{adjustbox}
\end{table}

Although the experimental comparison in this work focuses on the original non-fault-tolerant broadcasting algorithm, it is also useful to position the proposed method relative to common fault-tolerant broadcasting strategies. Existing approaches for fault-tolerant broadcast in interconnection networks often rely on redundant spanning trees, multiple backup paths, adaptive rerouting, or local recovery mechanisms. These techniques improve reliability by adding alternative communication structures or by dynamically avoiding failed components during message propagation.

The proposed re-rooting method follows a different design philosophy. Instead of modifying the broadcast tree or constructing redundant delivery paths, it changes the effective source of the broadcast. The new source \(NS\) is selected so that the faulty nodes lie on the graph-distance-\(k\) boundary of the network with respect to \(NS\). As a result, the faulty nodes become leaf-level nodes in the broadcast process and are not required to forward the message to other nodes.

Table~\ref{tab:qualitative_comparison} summarizes the main qualitative differences between the proposed method and representative fault-tolerant broadcasting strategies.

The main advantage of the proposed approach is its simplicity. It does not require precomputed backup trees, disjoint path construction, or modification of the original one-to-all broadcasting procedure. Once \(NS\) is selected, the same broadcasting algorithm is executed from the new source. Therefore, the additional cost is limited to the source-selection step and the one-to-one relocation of the message from the original source to \(NS\).

However, this comparison is qualitative rather than a full empirical benchmark against all existing fault-tolerant broadcasting families. A complete implementation-level comparison with independent spanning tree methods, adaptive rerouting schemes, and redundant-path algorithms would require selecting specific representative algorithms and adapting them to the same dense Gaussian network model. Such an extended benchmark is beyond the scope of this paper, but the qualitative comparison clarifies the main trade-off: the proposed method offers a lightweight and topology-specific solution for one- and two-node failures, while more general fault-tolerant methods may support broader failure models at the cost of additional routing structures or higher runtime complexity.

\subsection{Failure Model and Limitations}

The proposed re-rooting method is designed for static node-failure scenarios. In this model, the set of faulty nodes is known before the broadcast begins and does not change during message propagation. Under this assumption, the new source node \(NS\) can be selected so that the faulty nodes are located at graph distance \(k\) from \(NS\), making them leaf-level nodes in the broadcast process.

This work focuses on one- and two-node failures. As proved earlier, for any pair of faulty nodes in a degree-4 dense Gaussian network, there always exists a node \(NS\) whose graph distance from both faulty nodes is exactly \(k\). Therefore, the proposed method provides a deterministic guarantee for all one- and two-node fault configurations. However, this guarantee does not extend to arbitrary three-node failures, as demonstrated by the counterexample in Proposition~\ref{prop:threefault}. For this reason, the proposed method is intentionally limited to one- and two-node failure scenarios.

The present model does not directly address link failures. A failed link may interrupt communication even if both endpoint nodes remain operational. In such a case, placing a faulty node on the graph-distance-\(k\) boundary is not sufficient, because the failure affects an edge rather than a forwarding node. Extending the re-rooting idea to link failures would require a different condition, possibly based on avoiding failed edges along the directional broadcast paths rather than selecting a source that makes faulty nodes leaves.

The model also assumes that failures are static during the broadcast. If a new node fails after the message has already been relocated to \(NS\), the selected source may no longer satisfy the required distance condition. Handling dynamic or transient failures would require either repeated re-rooting, online failure detection, or adaptive recovery during message propagation. These extensions are outside the scope of the present work.

Thus, the contribution of this paper is a deterministic and lightweight re-rooting method for static one- and two-node failures in dense Gaussian networks. Broader failure models, including link failures, transient faults, and dynamically appearing faults, are important directions for future work.

\subsection{Practical Implications}

The proposed re-rooting approach is particularly suitable for systems in which the underlying communication topology is regular and the broadcast algorithm is already optimized for parallel propagation. In such environments, adding redundant spanning trees or maintaining several backup routes may increase hardware complexity, routing-table storage, control overhead, and verification cost. By contrast, the proposed method preserves the original one-to-all broadcast algorithm and adds only a lightweight source-selection and relocation phase.

This property is attractive for NoC-based chip multiprocessors, many-core accelerators, distributed edge-computing fabrics, and embedded parallel systems, where predictable communication behavior and low implementation overhead are important. The method is also compatible with hardware-friendly routing because the selected new source is determined using the structure of the dense Gaussian network rather than by continuously adapting the broadcast path during message propagation.

The results suggest that re-rooting can be used as a topology-specific reliability mechanism when the expected failure model is limited to a small number of static node failures. In this setting, the method offers a practical trade-off: it improves broadcast reliability without requiring redundant broadcast structures, while keeping the worst-case communication time bounded by \(2k\).

The simulation procedure, network construction method, failure modes, and evaluation metrics are described to support reproducibility of the reported results.

\section{Conclusion}

This paper presented a lightweight re-rooting-based fault-tolerant broadcasting method for dense Gaussian networks. The main idea is to relocate the effective broadcast source so that faulty nodes are placed at graph distance \(k\), the network diameter, from the new source. Under the standard one-to-all broadcasting process, such nodes become leaf-level nodes and are therefore not required to forward the broadcast message.

The proposed method was developed for both single-node and two-node failure scenarios. For the single-fault case, a suitable new source can be selected directly from the graph-distance-\(k\) boundary of the faulty node. For the two-fault case, we proved that for any pair of faulty nodes in \(G(k+(k+1)i)\), there always exists a node whose graph distance from both faulty nodes is exactly \(k\). This establishes a deterministic existence guarantee for the proposed two-fault re-rooting strategy.

The paper also identified the limitation of this guarantee. A counterexample in \(G(3+4i)\) shows that a common graph-distance-\(k\) node does not necessarily exist for arbitrary triples of faulty nodes. Therefore, the method is intentionally positioned as a deterministic solution for one- and two-node failures in degree-4 dense Gaussian networks, rather than as a general solution for all multi-fault configurations.

The cost analysis shows that the source-selection procedure requires \(O(k)\) computation. The original one-to-all broadcast completes in \(k\) parallel communication steps, and the re-rooting method adds only one preliminary one-to-one relocation step whose length is at most \(k\). Thus, the total worst-case broadcast time is bounded by \(2k\) steps, while preserving the original broadcasting algorithm and avoiding redundant spanning trees, backup paths, or additional broadcast structures.

Simulation results confirm that the proposed method achieves complete delivery to all non-faulty nodes in the tested one- and two-node failure scenarios, while the baseline broadcast may fail when faulty nodes occur at internal forwarding positions. These results support the theoretical analysis and demonstrate that source relocation can provide an efficient and topology-specific mechanism for improving broadcast reliability in dense Gaussian networks.

Future work will investigate extensions of the re-rooting idea to broader failure models, including link failures, dynamic or transient faults, and higher-order node failures. Another important direction is to compare re-rooting experimentally with specific independent-spanning-tree and adaptive-routing methods under a unified dense Gaussian network simulation framework.

\section*{Acknowledgments}
The authors would like to acknowledge the support of Kuwait University and its Computer Science Department.

\end{document}